\renewcommand{\d}{\mathrm{d}}
\newcommand{\cE}{\mathcal{E}}
\newcommand{\cF}{\mathcal{F}}
\newcommand{\cL}{\mathcal{L}}
\newcommand{\cV}{\mathcal{V}}
\newcommand{\var}[3]{\frac{\delta_{#1} {#2}}{\delta {#3}}}
\newcommand{\der}[2]{\frac{\partial {#1}}{\partial {#2}}}
\newcommand{\R}{\mathbb{R}}
\newcommand{\tint}{{\textstyle \int}}
\newcommand{\q}[1]{ q^{\scriptscriptstyle [#1]} }
\DeclareMathOperator{\pr}{pr}
\newcommand{\copyrightnote}[2]{{\renewcommand{\thefootnote}{}
 \footnotetext{\small\it
\begin{flushleft}
 \copyright \ #1   #2  
\end{flushleft}}}}
\def\d{\mbox{\rm d}}
\newcommand{\Name}[1]{\begin{flushleft}
                       \LARGE \bf #1
                       \end{flushleft}\vspace{-3mm}}
\newcommand{\Author}[1]{\begin{flushleft}
                       \it #1 \end{flushleft}}
\newcommand{\Address}[1]{\begin{flushleft}
                       \it #1 \end{flushleft}}
\newcommand{\Date}[1]{\begin{flushleft}
                      \small  \it #1 \end{flushleft}}
\newcommand{\evenhead}{Mats Vermeeren}
\newcommand{\oddhead}{Hamiltonian structures for integrable hierarchies of Lagrangian PDEs}
\renewcommand{\@evenhead}{
\hspace*{-3pt}\raisebox{-15pt}[\headheight][0pt]{\vbox{\hbox to \textwidth
{\thepage \hfil \evenhead}\vskip4pt \hrule}}}
\renewcommand{\@oddhead}{
\hspace*{-3pt}\raisebox{-15pt}[\headheight][0pt]{\vbox{\hbox to \textwidth
{\oddhead \hfil \thepage}\vskip4pt\hrule}}}
\renewcommand{\@evenfoot}{}
\renewcommand{\@oddfoot}{}
\long\def\@makecaption#1#2{%
  \vskip\abovecaptionskip
  \sbox\@tempboxa{\small \textbf{#1.}\ \ #2}%
  \ifdim \wd\@tempboxa >\hsize
    {\small \textbf{#1.}\ \ #2}\par
  \else
    \global \@minipagefalse
    \hb@xt@\hsize{\hfil\box\@tempboxa\hfil}%
  \fi
  \vskip\belowcaptionskip}
\newcommand{\JNMPnumberwithin}[3][\arabic]{%
  \@ifundefined{c@#2}{\@nocounterr{#2}}{%
    \@ifundefined{c@#3}{\@nocnterr{#3}}{%
      \@addtoreset{#2}{#3}%
      \@xp\xdef\csname the#2\endcsname{%
        \@xp\@nx\csname the#3\endcsname .\@nx#1{#2}}}}%
}
\renewenvironment{proof}[1][\proofname]{\par
  \normalfont
  \topsep6\p@\@plus6\p@ \trivlist
  \item[\hskip\labelsep\textbf{%
    #1\@addpunct{.}}]\ignorespaces
}{%
  \qed\endtrivlist
}
\newcommand{\resetfootnoterule} {
  \renewcommand\footnoterule{%
  \kern-3\p@
  \hrule\@width.4\columnwidth
  \kern2.6\p@}
}
\renewcommand{\footnoterule}{}
\newtheorem{thm}{Theorem}[section]
\newtheorem{prop}[thm]{Proposition}
\newtheorem{cor}[thm]{Corollary}
\theoremstyle{definition}
\newtheorem{definition}[thm]{Definition}
\newtheorem{example}[thm]{Example}
\begin{document}

\renewcommand{\evenhead}{ {\LARGE\textcolor{blue!10!black!40!green}{{\sf \ \ \ ]ocnmp[}}}\strut\hfill Mats Vermeeren}
\renewcommand{\oddhead}{ {\LARGE\textcolor{blue!10!black!40!green}{{\sf ]ocnmp[}}}\ \ \ \ \   Hamiltonian structures for hierarchies of Lagrangian PDEs}

%%%% Matter for the first page 
\thispagestyle{empty}
\newcommand{\FistPageHead}[3]{
\begin{flushleft}
\raisebox{8mm}[0pt][0pt]
{\footnotesize \sf
\parbox{150mm}{{Open Communications in Nonlinear Mathematical Physics}\ \  \ {\LARGE\textcolor{blue!10!black!40!green}{]ocnmp[}}
\ \ Vol.1 (2021) pp
#2\hfill {\sc #3}}}\vspace{-13mm}
\end{flushleft}}

\FistPageHead{1}{\pageref{firstpage}--\pageref{lastpage}}{ \ \ Article}

\strut\hfill

\strut\hfill

\copyrightnote{The author(s). Distributed under a Creative Commons Attribution 4.0 International License}

\Name{Hamiltonian structures for integrable hierarchies of Lagrangian PDEs}

\Author{Mats Vermeeren$^{\,1}$}

\Address{$^{1}$ School of Mathematics, University of Leeds, Leeds, LS2 9JT, UK.
	\texttt{m.vermeeren@leeds.ac.uk}}

\Date{Received May 18, 2021; Accepted August 31, 2021}

\setcounter{equation}{0}

\begin{abstract}
\noindent 
Many integrable hierarchies of differential equations allow a variational description, called a Lagrangian multiform or a pluri-Lagrangian structure. The fundamental object in this theory is not a Lagrange function but a differential $d$-form that is integrated over arbitrary $d$-dimensional submanifolds. All such action integrals must be stationary for a field to be a solution to the pluri-Lagrangian problem. In this paper we present a procedure to obtain Hamiltonian structures from the pluri-Lagrangian formulation of an integrable hierarchy of PDEs. As a prelude, we review a similar procedure for integrable ODEs. We show that the exterior derivative of the Lagrangian $d$-form is closely related to the Poisson brackets between the corresponding Hamilton functions. In the ODE (Lagrangian 1-form) case we discuss as examples the Toda hierarchy and the Kepler problem. As examples for the PDE (Lagrangian 2-form) case we present the potential and Schwarzian Korteweg-de Vries hierarchies, as well as the Boussinesq hierarchy.
\end{abstract}

\label{firstpage}

%%%% The Article text starts here

\section{Introduction}

Some of the most powerful descriptions of integrable systems use the Hamiltonian formalism. In mechanics, Liouville-Arnold integrability means having as many independent Hamilton functions as the system has degrees of freedom, such that the Poisson bracket of any two of them vanishes.  In the case of integrable PDEs, which have infinitely many degrees of freedom, integrability is often defined as having an infinite number of commuting Hamiltonian flows, where again each two Hamilton functions have a zero Poisson bracket. In addition, many integrable PDEs have two compatible Poisson brackets. Such a bi-Hamiltonian structure can be used to obtain a recursion operator, which in turn is an effective way to construct an integrable hierarchy of PDEs.

In many cases, especially in mechanics, Hamiltonian systems have an equivalent Lagrangian description. This raises the question whether integrability can be described from a variational perspective too. Indeed, a Lagrangian theory of integrable hierarchies has been developed over the last decade or so, originating in the theory of integrable lattice equations (see for example \cite{lobb2009lagrangian}, \cite{bobenko2010lagrangian}, \cite[Chapter 12]{hietarinta2016discrete}). It is called the theory of \emph{Lagrangian multiform} systems, or, of \emph{pluri-Lagrangian} systems. The continuous version of this theory, i.e.\@ its application to differential equations, was developed among others in \cite{suris2013variational,suris2016lagrangian}. Recently, connections have been established between pluri-Lagrangian systems and variational symmetries \cite{petrera2017variational,petrera2019variational,sleigh2019variational} as well as Lax pairs \cite{sleigh2019lax}. 

Already in one of the earliest studies of continuous pluri-Lagrangian structures \cite{suris2013variational}, the pluri-Lagrangian principle for ODEs was shown to be equivalent to the existence of commuting Hamiltonian flows (see also \cite{sridhar2019commutativity}).  In addition, the property that Hamilton functions are in involution can be expressed in Lagrangian terms as closedness of the Lagrangian form. The main goal of this work is to generalize this connection between pluri-Lagrangian and Hamiltonian structures to the case of integrable PDEs.

A complementary approach to connecting pluri-Lagrangian structures to Hamiltonian structures was recently taken in \cite{caudrelier2020hamiltonian}. There, a generalisation of covariant Hamiltonian field theory is proposed, under the name \emph{Hamiltonian multiform}, as the Hamiltonian counterpart of Lagrangian multiform systems. This yields a Hamiltonian framework where all independent variables are on the same footing. In the present work we obtain classical Hamiltonian structures where one of the independent variables is singled out as the common space variable of all equations in a hierarchy.

We begin this paper with an introduction to pluri-Lagrangian systems in Section \ref{sec-plurilag}. The exposition there relies mostly on examples, while proofs of the main theorems can be found in Appendix \ref{sec-appendix}. Then we discuss how pluri-Lagrangian systems generate Hamiltonian structures, using symplectic forms in configuration space. In Section \ref{sec-1form} we review this for ODEs (Lagrangian 1-form systems) and in Section \ref{sec-2form} we present the case of $(1+1)$-dimensional PDEs (Lagrangian 2-form systems). In each section, we illustrate the results by examples.

\section{Pluri-Lagrangian systems}
\label{sec-plurilag}

A hierarchy of commuting differential equations can be embedded in a higher-dimensional space of independent variables, where each equation has its own time variable. All equations share the same space variables (if any) and have the same configuration manifold $Q$. We use coordinates $t_1,t_2,\ldots,t_N$ in the \emph{multi-time} $M = \R^N$. In the case of a hierarchy of $(1+1)$-dimensional PDEs, the first of these coordinates is a common space coordinate, $t_1 = x$, and we assume that for each $i \geq 2$ there is a PDE in the hierarchy expressing the $t_i$-derivative of a field $u: M \rightarrow Q$ in terms of $u$ and its $x$-derivatives. Then the field $u$ is determined on the whole multi-time $M$ if initial values are prescribed on the $x$-axis. In the case of ODEs, we assume that there is a differential equation for each of the time directions. Then initial conditions at one point in multi-time suffice to determine a solution.

We view a field $u: M \rightarrow Q$ as a smooth section of the trivial bundle $M \times Q$, which has coordinates $(t_1,\ldots,t_N,u)$. The extension of this bundle containing all partial derivatives of $u$ is called the \emph{infinite jet bundle} and denoted by $M \times J^\infty$. Given a field $u$, we call the corresponding section $\llbracket u \rrbracket = (u,u_{t_i},u_{t_it_j},\ldots)$ of the infinite jet bundle the \emph{infinite jet prolongation} of $u$. (See e.g.\@ \cite{anderson1992introduction} or \cite[Sec.\@ 3.5]{olver2000applications}.)

In the pluri-Lagrangian context, the Lagrange function is replaced by a jet-dependent $d$-form. More precisely we consider a fiber-preserving map 
\[ \textstyle \cL: M \times J^\infty \rightarrow \bigwedge^d (T^*M). \]
Since a field $u: M \rightarrow Q$ defines a section of the infinite jet bundle, $\cL$ associates to it a section of $\bigwedge^d (T^*M)$, that is, a $d$-form $\cL\llbracket u \rrbracket$. We use the square brackets $\llbracket u \rrbracket$ to denote dependence on the infinite jet prolongation of $u$. We take $d = 1$ if we are dealing with ODEs and $d = 2$ if we are dealing with $(1+1)$-dimensional PDEs. Higher-dimensional PDEs would correspond to $d > 2$, but are not considered in the present work. (An example of a Lagrangian $3$-form system, the KP hierarchy, can be found in \cite{sleigh2019variational}.) We write 
\[ \cL\llbracket u \rrbracket = \sum_{i} \cL_{i}\llbracket u \rrbracket \, \d t_i \]
for 1-forms and 
\[ \cL\llbracket u \rrbracket = \sum_{i,j} \cL_{ij}\llbracket u \rrbracket \, \d t_i \wedge \d t_j \]
for 2-forms.

\begin{definition}\label{def-pluri}
	A field $u: M \rightarrow Q$ is a solution to the \emph{pluri-Lagrangian problem} for the jet-dependent $d$-form $\cL$, if for every $d$-dimensional submanifold $\Gamma \subset M$ the action $\int_\Gamma \cL\llbracket u \rrbracket $ is critical with respect to variations of the field $u$,
	i.e.\@ 
	\[ \frac{\d}{\d \varepsilon} \int_\Gamma \cL\llbracket u + \epsilon v \rrbracket \bigg|_{\varepsilon = 0} = 0 \]	
	for all variations $v: M \rightarrow Q$ such that $v$ and all its partial derivatives are zero on $\partial \Gamma$.
\end{definition}

Some authors include in the definition that the Lagrangian $d$-form must be closed when evaluated on solutions. That is equivalent to requiring that the action is not just critical on every $d$-submanifold, but also takes the same value on every $d$-submanifold (with the same boundary and topology). In this perspective, one can take variations of the submanifold $\Gamma$ as well as of the fields. We choose not to include the closedness in our definition, because slightly weaker property can be obtained as a consequence Definition \ref{def-pluri} (see Proposition \ref{prop-almost-closed} in the Appendix). Most of the authors that include closedness in the definition use the term ``Lagrangian multiform'' (e.g.\@ \cite{lobb2009lagrangian,hietarinta2016discrete,xenitidis2011lagrangian,yoo2011discrete, sleigh2019variational}), whereas those that do not tend to use ``pluri-Lagrangian'' (e.g.\@ \cite{bobenko2015discrete,boll2014integrability,suris2016variational}). Whether or not it is included in the definition, closedness of the Lagrangian $d$-form is an important property. As we will see in Sections \ref{sec-closed1} and \ref{sec-closed2}, it is the Lagrangian counterpart to vanishing Poisson brackets between Hamilton functions.

Clearly the pluri-Lagrangian principle is stronger than the usual variational principle for the individual coefficients $\cL_i$ or $\cL_{ij}$ of the Lagrangian form. Hence the classical Euler-Lagrange equations are only a part of the system equations characterizing a solution to the pluri-Lagrangian problem. This system, which we call the \emph{multi-time Euler-Lagrange equations}, was derived in \cite{suris2016lagrangian} for Lagrangian 1- and 2-forms by approximating an arbitrary given curve or surface $\Gamma$ by \emph{stepped} curves or surfaces, which are piecewise flat with tangent spaces spanned by coordinate directions. In Appendix \ref{sec-appendix} we give a more intrinsic proof that the multi-time Euler-Lagrange equations imply criticality in the pluri-Lagrangian sense. Yet another proof can be found in \cite{sleigh2020lagrangian}.

In order to write the multi-time Euler-Lagrange equations in a convenient form, we will use the multi-index notation for (mixed) partial derivatives. Let $I$ be an $N$-index, i.e.\@ a $N$-tuple of non-negative integers. We denote by $u_I$ the mixed partial derivative of $u: \R^N \rightarrow Q$, where the number of derivatives with respect to each $t_i$ is given by the entries of $I$. Note that if $I = (0,\ldots,0)$, then $u_I = u$. We will often denote a multi-index suggestively by a string of $t_i$-variables, but it should be noted that this representation is not always unique. For example,
\[ t_1 = (1,0,\ldots,0), \qquad t_N = (0,\ldots,0,1), \qquad t_1 t_2 = t_2 t_1 = (1,1,0,\ldots,0) . \]
In this notation we will also make use of exponents to compactify the expressions, for example
\[ t_2^3 = t_2 t_2 t_2 = (0,3,0,\ldots,0). \]
The notation $I t_j$ should be interpreted as concatenation in the string representation, hence it denotes the multi-index obtained from $I$ by increasing the $j$-th entry by one. Finally, if the $j$-th entry of $I$ is nonzero we say that $I$ contains $t_j$, and write $I \ni t_j$.

\subsection{Lagrangian 1-forms}

\begin{thm}[\cite{suris2016lagrangian}]\label{thm-EL1}
	Consider the Lagrangian 1-form 
	\[ \cL\llbracket u \rrbracket = \sum_{j=1}^N \cL_j\llbracket u \rrbracket \,\d t_j, \]
	depending on an arbitrary number of derivatives of $u$. A field $u$ is critical in the pluri-Lagrangian sense if and only if it satisfies the multi-time Euler-Lagrange equations
	\begin{align}
		&\label{el1} \var{j}{\cL_j}{u_I} = 0 & \forall I \not\ni t_j , \\
		&\label{el2} \var{j}{\cL_j}{u_{It_j}} - \var{1}{\cL_1}{u_{It_1}} = 0 &  \forall I,
	\end{align}
	for all indices $j \in \{1,\ldots,N\}$, where $\var{j}{}{u_I}$ denotes the variational derivative in the direction of $t_j$ with respect to $u_I$, 
	\[ \var{j}{}{u_I} = \der{}{u_I} - \partial_j \der{}{u_{It_j}} + \partial_j^2 \der{}{u_{It_jt_j}} - \cdots , \]
	and $\partial_j = \frac{\d}{\d t_j}$.
\end{thm}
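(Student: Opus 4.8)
The plan is to reduce both implications to a single integration-by-parts identity for the variation of each coefficient, and then to extract the two families of equations from its bulk and boundary parts respectively. Writing $v_I = \partial_I v$ for the prolonged variation, the variation of one coefficient is $\delta\cL_j = \sum_I \der{\cL_j}{u_I}\, v_I$. I would split each multi-index as $I = K t_j^p$ with $K \not\ni t_j$ and $p \geq 0$, so that $v_I = \partial_j^p v_K$, and integrate by parts $p$ times in the $t_j$-direction. Collecting the bulk terms into the $t_j$-variational derivatives and the boundary terms into a total $t_j$-derivative yields the key identity
\[ \delta\cL_j = \sum_{K \not\ni t_j} \var{j}{\cL_j}{u_K}\, v_K + \partial_j R_j, \qquad R_j = \sum_I \var{j}{\cL_j}{u_{It_j}}\, v_I. \]
The only delicate bookkeeping here is to check that, after grouping the boundary contributions by the order $s$ of the surviving $t_j$-derivative acting on $v_K$, the coefficient of $v_{Kt_j^s}$ is exactly $\var{j}{\cL_j}{u_{Kt_j^{s+1}}}$; reindexing by $I = K t_j^s$ then gives $R_j$ in the stated form with $I$ ranging over all multi-indices.

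For sufficiency (the multi-time Euler-Lagrange equations imply criticality) I would argue intrinsically, for an arbitrary curve $\Gamma$. Equation \eqref{el1} makes the entire bulk sum vanish, leaving $\delta\cL_j = \partial_j R_j$. Equation \eqref{el2} says precisely that $R_j = \sum_I \var{1}{\cL_1}{u_{It_1}}\, v_I =: R$ is the \emph{same} function for every $j$. Hence $\delta\cL = \sum_j \partial_j R\, \d t_j = \d R$ is an exact horizontal $1$-form, and $\int_\Gamma \delta\cL = \int_\Gamma \d R = R\big|_{\partial\Gamma} = 0$ because $v$ and all its derivatives vanish on $\partial\Gamma$. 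This is the clean mechanism of the theorem: \eqref{el2} is exactly the condition needed to turn the variation into an exact form, so that criticality holds on \emph{every} curve at once.

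For necessity I would probe the criticality condition with specially chosen curves and variations. Restricting $\Gamma$ to a segment along the $t_j$-axis isolates $\int \cL_j\,\d t_j$; taking $v$ compactly supported in the interior and invoking the multi-time fundamental lemma of the calculus of variations — that the restriction of any single transverse derivative $\partial_K v$ with $K \not\ni t_j$ along the segment can be prescribed independently — forces each $\var{j}{\cL_j}{u_K} = 0$, which is \eqref{el1}. To obtain \eqref{el2} I would use a curve with a corner where a $t_j$-segment meets a $t_1$-segment, allow $v$ to be nonzero at the (interior) corner, and match the boundary expressions $R_j$ and $R_1$ produced by the identity above; their cancellation for arbitrary values of the jet of $v$ at the corner yields $\var{j}{\cL_j}{u_{It_j}} = \var{1}{\cL_1}{u_{It_1}}$ for all $I$.

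The main obstacle I anticipate is the necessity direction, and specifically the corner argument for \eqref{el2}: one must arrange the curve and variation so that only the corner-matching of boundary terms is probed, with no residual bulk contribution, and confirm that the components $v_I$ of the jet of $v$ at the corner are genuinely independent quantities. Making the stepped-curve reduction rigorous — that criticality on coordinate-direction curves already forces criticality on arbitrary curves — is the technical heart of the argument; the intrinsic identity $\delta\cL = \d R$ derived above is what reconciles the two viewpoints and certifies that no information is lost in passing from stepped curves to general ones.
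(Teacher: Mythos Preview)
Your sufficiency argument is essentially the paper's own proof, phrased with a concrete variation $v$ rather than in the variational-bicomplex language: the paper writes the same decomposition as $\delta\cL_j = \sum_{I\not\ni t_j}\var{j}{\cL_j}{u_I}\,\delta u_I + \partial_j\!\left(\sum_I p^I\,\delta u_I\right)$ with $p^I=\var{j}{\cL_j}{u_{It_j}}$, uses \eqref{el1} to kill the first sum and \eqref{el2} to make $p^I$ independent of $j$, and concludes $\delta\cL = -\d\!\left(\sum_I p^I\,\delta u_I\right)$, invoking Proposition~\ref{prop-delta=d}. Your $R$ is exactly their $\sum_I p^I\,\delta u_I$ paired with $\pr V$.

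Where you go beyond the paper is necessity: the paper does \emph{not} prove it, but defers to \cite{suris2016lagrangian} and \cite{sleigh2020lagrangian}. Your stepped-curve outline is precisely the approach of \cite{suris2016lagrangian}. One remark on your closing paragraph: for necessity you never need to pass from stepped curves to general ones---criticality on all curves trivially implies criticality on stepped ones, and that is all you use to extract \eqref{el1}--\eqref{el2}. The ``no information lost'' reconciliation you mention is really just your sufficiency proof again, showing that the equations obtained from stepped curves already force $\delta\cL$ to be exact and hence give criticality on arbitrary $\Gamma$. The genuine care required in the corner argument is only that the corner be an interior point of $\Gamma$ (so $v$ is unconstrained there) and that the $v_I$ at that point be freely prescribable; both hold because the values $v_I|_{\text{corner}}$ for distinct $I$ are independent jet coordinates.
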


Note the derivative $\partial_j$ equals the total derivative $\sum_I u_{It_j} \der{}{u_I}$ if it is applied to a function $f\llbracket u \rrbracket$ that only depends on $t_j$ through $u$. Using the total derivative has the advantage that calculations can be done on an algebraic level,  where the $u_I$ are formal symbols that do not necessarily have an analytic interpretation as a derivative.

\begin{example}\label{ex-toda}
	The Toda lattice describes $N$ particles on a line with an exponential nearest-neighbor interaction. We denote the displacement from equilibrium of the particles by $u = (\q{1},\ldots,\q{N})$. We impose either periodic boundary conditions (formally $\q{0} = \q{N}$ and $\q{N+1} = \q{1}$) or open-ended boundary conditions (formally $\q{0} = \infty$ and $\q{N+1} = -\infty$). We will use $\q{k}_j$ as shorthand notation for the derivative $\q{k}_{t_j} = \frac{\d \q{k}}{\d t_j}$. Consider the hierarchy consisting of the Newtonian equation for the Toda lattice,
	\begin{equation}\label{toda-equation}
		\q{k}_{11} = \exp\!\left( \q{k+1} - \q{k} \right) - \exp\!\left( \q{k} - \q{k-1} \right) , \\
	\end{equation}
	along with its variational symmetries,
	\begin{equation}\label{toda-hierarchy}
		\begin{split}
			\q{k}_2 &= \left( \q{k}_1 \right)^2 + \exp\!\left( \q{k+1} - \q{k} \right) + \exp\!\left( \q{k} - \q{k-1} \right) ,\\
			\q{k}_3 &= \left( \q{k}_1 \right)^3 + \q{k+1}_1 \exp\!\left( \q{k+1} - \q{k} \right) + \q{k-1}_1\exp\!\left( \q{k} - \q{k-1} \right) \\
			&\quad + 2 \q{k}_1 \left( \exp\!\left( \q{k+1} - \q{k} \right) + \exp\!\left( \q{k} - \q{k-1} \right) \right),\\
			&\vdotswithin{=}
		\end{split}
	\end{equation}
	The hierarchy \eqref{toda-equation}--\eqref{toda-hierarchy} has a Lagrangian 1-form with coefficients
	\begin{align*}
		\cL_1 &= \sum_k \left( \frac{1}{2} \left( \q{k}_1 \right)^2 - \exp\!\left( \q{k} - \q{k-1} \right) \right) , \\
		\cL_2 &= \sum_k \left( \q{k}_1 \q{k}_2 - \frac{1}{3}\left( \q{k}_1 \right)^3 - \left( \q{k}_1 + \q{k-1}_1 \right) \exp\!\left( \q{k} - \q{k-1} \right) \right) , \\
		\cL_3 &= \sum_k \bigg( -\frac{1}{4} \left( \q{k}_1 \right)^{4} - \left( \left( \q{k+1}_1 \right)^2 + \q{k+1}_1 \q{k}_1 + \left( \q{k}_1 \right)^2 \right) \exp\!\left( \q{k+1} - \q{k} \right) \\
		&\hspace{1.5cm} + \q{k}_1 \q{k}_3 - \exp\!\left( \q{k+2} - \q{k} \right) - \frac{1}{2} \exp\!\left( 2 (\q{k+1} - \q{k}) \right) \bigg) ,\\
		&\vdotswithin{=}
	\end{align*}
	See \cite{petrera2017variational,vermeeren2019continuum} for constructions of this pluri-Lagrangian structure. The classical Euler-Lagrange equations of these Lagrangian coefficients are
	\begin{align*}
		\var{1}{\cL_1}{\q{k}} = 0 \quad &\Leftrightarrow \quad
		\q{k}_{11} = e^{\q{k+1} - \q{k}} - e^{\q{k} - \q{k-1}} , \\
		\var{2}{\cL_2}{\q{k}} = 0 \quad &\Leftrightarrow \quad
		\q{k}_{12} = 
		\left( \q{k}_1 + \q{k+1}_1 \right) e^{\q{k+1} - \q{k}} - \left( \q{k-1}_1 + \q{k}_1 \right) e^{\q{k} - \q{k-1}} , \\
		&\vdotswithin{\Rightarrow}
	\end{align*}
	We recover Equation \eqref{toda-equation}, but for the other equations of the hierarchy we only get a differentiated form. However, we do get their evolutionary form, as in Equation  \eqref{toda-hierarchy}, from the multi-time Euler-Lagrange equations 
	\[ \var{2}{\cL_2}{\q{k}_1} = 0, \qquad \var{3}{\cL_3}{\q{k}_1} = 0, \qquad \cdots . \]
	The multi-time Euler-Lagrange equations of type \eqref{el2} are trivially satisfied in this case:
	$\var{j}{\cL_j}{\q{k}_j} = \q{k}_1$ for all $j$.
\end{example}

\subsection{Lagrangian 2-forms}

\begin{thm}[\cite{suris2016lagrangian}]\label{thm-EL2}
	Consider the Lagrangian 2-form 
	\[ \cL\llbracket u \rrbracket = \sum_{i < j} \cL_{ij}\llbracket u \rrbracket \,\d t_i \wedge \d t_j, \]
	depending on an arbitrary number of derivatives of $u$.	A field $u$ is critical in the pluri-Lagrangian sense if and only if it satisfies the multi-time Euler-Lagrange equations
	\begin{align}
		&\label{2el1} \var{ij}{\cL_{ij}}{u_I} = 0 & \forall I \not\ni t_i,t_j , \\
		&\label{2el2} \var{ij}{\cL_{ij}}{u_{It_j}} - \var{ik}{\cL_{ik}}{u_{It_k}} = 0 & \forall I \not\ni t_i , \\
		&\label{2el3} \var{ij}{\cL_{ij}}{u_{It_it_j}} + \var{jk}{\cL_{jk}}{u_{It_jt_k}} + \var{ki}{\cL_{ki}}{u_{It_kt_i}}= 0 & \forall I ,
	\end{align}
	for all triples $(i,j,k)$ of distinct indices, where
	\[ \var{ij}{}{u_I} = \sum_{\alpha,\beta = 0}^\infty (-1)^{\alpha+\beta} \partial_i^\alpha \partial_j^\beta \der{}{u_{I t_i^\alpha t_j^\beta}} . \]
\end{thm}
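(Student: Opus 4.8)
The plan is to prove both implications by computing the first variation of the action and reducing it, through integration by parts with the total derivatives $\partial_k$, to a statement about a single $2$-form on $\Gamma$. Differentiating under the integral sign gives
\[ \frac{\d}{\d\varepsilon}\bigg|_{\varepsilon=0}\int_\Gamma\cL\llbracket u+\varepsilon v\rrbracket = \int_\Gamma\sum_{i<j}\bigg(\sum_I\der{\cL_{ij}}{u_I}\,v_I\bigg)\,\d t_i\wedge\d t_j =: \int_\Gamma\delta_v\cL . \]
The whole argument hinges on rewriting $\delta_v\cL$ as a bulk $2$-form whose coefficients are the multi-time variational derivatives $\var{ij}{}{u_I}$ applied to $v$, plus a total-derivative remainder $\d\beta_v$, where $\beta_v$ is a $1$-form linear in $v$ and its derivatives. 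By Stokes' theorem the remainder contributes $\int_{\partial\Gamma}\beta_v=0$ because $v$ and all its derivatives vanish on $\partial\Gamma$, so criticality is equivalent to the vanishing of the bulk integral over every surface $\Gamma$ and for every admissible $v$. This parallels the $1$-form case of Theorem \ref{thm-EL1}.

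For the ``only if'' direction I would extract the three families by testing against adapted surfaces, following \cite{suris2016lagrangian}. Taking $\Gamma$ to be a patch of the coordinate plane spanned by $\d t_i,\d t_j$, only the $t_i$- and $t_j$-derivatives of $v$ can be integrated by parts along $\Gamma$: writing each multi-index as $I=Jt_i^\alpha t_j^\beta$ with $J\not\ni t_i,t_j$ and moving the $\alpha$ derivatives in $t_i$ and $\beta$ in $t_j$ onto the Lagrangian produces exactly $\var{ij}{\cL_{ij}}{u_J}$ as the coefficient of $v_J$. Since the transverse jet data $v_J$ for distinct $J\not\ni t_i,t_j$ are freely prescribable on the plane, the fundamental lemma of the calculus of variations forces each coefficient to vanish, giving \eqref{2el1}. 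Equations \eqref{2el2} and \eqref{2el3} are the genuinely multi-time content and require surfaces that bend between coordinate planes: a surface with an edge along $t_i$ interpolating between the $t_it_j$- and $t_it_k$-planes yields a matching condition for the momenta conjugate to the bending direction, namely \eqref{2el2}, while a surface turning a corner at which the three planes $t_it_j$, $t_jt_k$, $t_kt_i$ meet yields the cyclic relation \eqref{2el3}.

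For the converse I would argue intrinsically, as in Appendix \ref{sec-appendix}. Assuming \eqref{2el1}--\eqref{2el3}, I claim $\delta_v\cL=\d\beta_v$, with $\beta_v$ the momentum $1$-form assembled from the boundary terms of the integration by parts above, so that $\int_\Gamma\delta_v\cL=\int_{\partial\Gamma}\beta_v=0$ for every $\Gamma$. Computing $\d\beta_v$ one finds that the bulk reproduces $\sum_I\der{\cL_{ij}}{u_I}v_I$ automatically, while the terms obstructing exactness organise precisely into the left-hand sides of \eqref{2el1}--\eqref{2el3}: \eqref{2el1} annihilates the terms supported on a single coordinate plane, \eqref{2el2} cancels the edge terms where two planes meet, and \eqref{2el3} cancels the corner terms.

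The main obstacle is the combinatorial bookkeeping of simultaneous integration by parts in several variables and, in the converse, the verification that $\d\beta_v$ leaves no residue once \eqref{2el1}--\eqref{2el3} are imposed. The delicate point is the corner contribution: the total derivative of $\beta_v$ carries a component antisymmetric in three indices that does not cancel termwise, and showing that its vanishing is \emph{equivalent} to \eqref{2el3}, rather than to some stronger or weaker condition, is where the argument must be carried out with care. Tracking which multi-indices $I$ do or do not contain $t_i,t_j,t_k$ — the conditions $I\not\ni t_i,t_j$ and $I\not\ni t_i$ in the statement — is exactly what separates the three cases and makes the cancellations exact.
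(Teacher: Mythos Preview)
Your plan is sound and aligns with the paper, with one organisational point worth flagging: the paper's own proof (in Appendix~\ref{sec-appendix}) treats \emph{only sufficiency}; necessity is delegated to \cite{suris2016lagrangian} and \cite{sleigh2020lagrangian}. Your necessity sketch via stepped surfaces (coordinate planes for \eqref{2el1}, edges for \eqref{2el2}, corners for \eqref{2el3}) is precisely the argument of \cite{suris2016lagrangian} that the paper cites but does not reproduce, so there is nothing to compare on that half.

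For sufficiency your approach and the paper's are the same in substance, differing only in language. You work with a test variation $v$ and aim to show $\delta_v\cL=\d\beta_v$ on solutions; the paper works in the variational bicomplex and shows $\delta\cL=\d\Theta$ as $(1,1)$-forms, then invokes Proposition~\ref{prop-delta=d}. The one concrete ingredient the paper makes explicit, and which you leave implicit under ``integration by parts'', is the pointwise identity
\[
\der{\cL_{ij}}{u_I}=\var{ij}{\cL_{ij}}{u_I}+\partial_i\var{ij}{\cL_{ij}}{u_{It_i}}+\partial_j\var{ij}{\cL_{ij}}{u_{It_j}}+\partial_i\partial_j\var{ij}{\cL_{ij}}{u_{It_it_j}},
\]
valid for every multi-index $I$. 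This is what organises the four-term decomposition cleanly and lets one read off the explicit momenta
\[
p_j^I=\var{1j}{\cL_{1j}}{u_{It_1}}\quad(I\not\ni t_j),\qquad n_j^I=\var{1j}{\cL_{1j}}{u_{It_1t_j}},
\]
so that $\Theta=-\sum_j\big(\sum_{I\not\ni t_j}p_j^I\,\delta u_I+\sum_I(n_j^I\,\delta u_{It_j}+\partial_jn_j^I\,\delta u_I)\big)\wedge\d t_j$. With this identity in hand the ``combinatorial bookkeeping'' you flag as the main obstacle becomes a straightforward regrouping by whether $I$ contains $t_i$, $t_j$, both, or neither, and the three families \eqref{2el1}--\eqref{2el3} are exactly what is needed to make the well-definedness of $p_j^I$, $n_j^I$ consistent and the residual bulk terms vanish. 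Your intuition that \eqref{2el3} handles the corner/antisymmetric-in-three-indices contribution is correct; in the paper's computation it appears as the condition allowing $\var{ij}{\cL_{ij}}{u_{It_it_j}}$ to be written as $n_j^I-n_i^I$.
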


\begin{example} 
	\label{ex-kdv}
	A Lagrangian 2-form for the potential KdV hierarchy was first given in \cite{suris2016lagrangian}. It is instructive to look at just two of the equations embedded in $\R^3$. Then the Lagrangian 2-form has three coefficients,
	\[ \cL = \cL_{12} \,\d t_1 \wedge \d t_2 +  \cL_{13} \,\d t_1 \wedge \d t_3 +  \cL_{23} \,\d t_2 \wedge \d t_3, \]
	where $t_1$ is viewed as the space variable.
	We can take 
	\begin{align*}
		\cL_{12} & =  -u_{1}^3 - \frac{1}{2} u_{1} u_{111} + \frac{1}{2} u_{1} u_{2} , \\
		\cL_{13} & = -\frac{5}{2} u_{1}^4 + 5 u_{1} u_{11}^2 - \frac{1}{2} u_{111}^2 + \frac{1}{2} u_{1} u_{3} ,
	\end{align*}
	where $u_i$ is a shorthand notation for the partial derivative $u_{t_i}$, and similar notations are used for higher derivatives. These are the classical Lagrangians of the potential KdV hierarchy. However, their classical Euler-Lagrange equations give the hierarchy only in a differentiated form,
	\begin{align*}
		u_{12} &= 6 u_1 u_{11} + u_{1111}, \\
		u_{13} &= 30 u_1^2 u_{11} + 20 u_{11} u_{111} + 10 u_1 u_{1111} + u_{111111}.
	\end{align*}
	The Lagrangian 2-form also contains a coefficient 
	\begin{align*}
		\cL_{23} & = 3 u_{1}^5 - \frac{15}{2} u_{1}^2 u_{11}^2 + 10 u_{1}^3 u_{111} - 5 u_{1}^3 u_{2} + \frac{7}{2} u_{11}^2 u_{111} + 3 u_{1} u_{111}^2 - 6 u_{1} u_{11} u_{1111}  \\
		&\quad + \frac{3}{2} u_{1}^2 u_{11111} + 10 u_{1} u_{11} u_{12} - \frac{5}{2} u_{11}^2 u_{2} - 5 u_{1} u_{111} u_{2} + \frac{3}{2} u_{1}^2 u_{3} - \frac{1}{2} u_{1111}^2 \\
		&\quad + \frac{1}{2} u_{111} u_{11111} - u_{111} u_{112} + \frac{1}{2} u_{1} u_{113} + u_{1111} u_{12} - \frac{1}{2} u_{11} u_{13} - \frac{1}{2} u_{11111} u_{2} \\
		&\quad + \frac{1}{2} u_{111} u_{3}
	\end{align*}
	which does not have a classical interpretation, but contributes meaningfully in the pluri-Lagrangian formalism. In particular, the multi-time Euler-Lagrange equations
	\[ \var{12}{\cL_{12}}{u_1} + \var{23}{\cL_{23}}{u_3} = 0 \qquad \text{and} \qquad \var{13}{\cL_{13}}{u_1} - \var{23}{\cL_{23}}{u_3} = 0\]
	yield the potential KdV equations in their evolutionary form,
	\begin{align*}
		u_2 &= 3 u_1^2 + u_{111}, \\
		u_3 &= 10 u_1^3 + 5 u_{11}^2 + 10 u_1 u_{111} + u_{11111}.
	\end{align*}
	All other multi-time Euler-Lagrange equations are consequences of these evolutionary equations.
	
	This example can be extended to contain an arbitrary number of equations from the potential KdV hierarchy. The coefficients $\cL_{1j}$ will be Lagrangians of the individual equations, whereas the $\cL_{ij}$ for $i,j>1$ do not appear in the traditional Lagrangian picture.
\end{example}

\begin{example}\label{ex-bsq}
	The Boussinesq equation
	\begin{equation}\label{bsq}
		u_{22} = 12 u_{1} u_{11} - 3 u_{1111}
	\end{equation}
	is of second order in its time $t_2$, but the higher equations of its hierarchy are of first order in their respective times, beginning with
	\begin{equation}\label{bsq3}
		u_{3} = -6 u_{1} u_{2} + 3 u_{112} .
	\end{equation}
	A Lagrangian 2-form for this system has coefficients
	\begin{align*}
		\cL_{12} &= \frac{1}{2} u_{2}^2 - 2 u_{1}^3 - \frac{3}{2} u_{11}^2 , \\
		\cL_{13} &= u_{2} u_{3} + 6 u_{1}^4 + 27 u_{1} u_{11}^2 - 6 u_{} u_{12} u_{2} + \frac{9}{2} u_{111}^2 + \frac{3}{2} u_{12}^2 , \\
		\cL_{23} &= 24 u_{1}^3 u_{2} + 18 u_{1} u_{11} u_{12} + 9 u_{11}^2 u_{2} - 18 u_{1} u_{111} u_{2} - 2 u_{2}^3 - 6 u_{} u_{2} u_{22} \\
		&\quad + 6 u_{1}^2 u_{3} + 9 u_{111} u_{112} + 3 u_{11} u_{13} + 3 u_{12} u_{22} - 3 u_{111} u_{3} .
	\end{align*}
	They can be found in \cite{vermeeren2019variational} with a different scaling of $\cL$ and a different numbering of the time variables. Equation \eqref{bsq} is equivalent to the Euler-Lagrange equation
	\[ \var{12}{\cL_{12}}{u} = 0 \]
	and Equation \eqref{bsq3} to
	\[ \var{13}{\cL_{13}}{u_2} = 0 . \]
	All other multi-time Euler-Lagrange equations are differential consequences of Equations \eqref{bsq} and \eqref{bsq3}. As in the previous example, it is possible to extend this 2-form to represent an arbitrary number of equations from the hierarchy.
\end{example}

Further examples of pluri-Lagrangian 2-form systems can be found in \cite{sleigh2019lax, sleigh2019variational, vermeeren2019continuum, vermeeren2019variational}.

\section{Hamiltonian structure of Lagrangian 1-form systems}
\label{sec-1form}

A connection between Lagrangian 1-form systems and Hamiltonian or symplectic systems was found in \cite{suris2013variational}, both in the continuous and the discrete case. Here we specialize that result to the common case where one coefficient of the Lagrangian 1-form is a mechanical Lagrangian and all others are linear in their respective time-derivatives. We formulate explicitly the underlying symplectic structures, which will provide guidance for the case of Lagrangian 2-form systems. Since some of the coefficients of the Lagrangian form will be linear in velocities, it is helpful to first have a look at the Hamiltonian formulation for Lagrangians of this type, independent of a pluri-Lagrangian structure.

\subsection{Lagrangians that are linear in velocities}
Let the configuration space be a finite-dimensional real vector space $Q = \R^N$ and consider a Lagrangian $\cL: T Q \rightarrow \R$ of the form
\begin{equation}\label{linlag}
	\cL(q,q_t) = p(q)^T q_t - V(q) ,
\end{equation}
where
\begin{equation}\label{degeneracy}
	\det \left( \der{p}{q} - \left(\der{p}{q}\right)^T \right) \neq 0 .
\end{equation}
Note that $p$ denotes a function of the position $q$; later on we will use $\pi$ to denote the momentum as an element of cotangent space. If $Q$ is a manifold, the arguments of this subsection will still apply if there exists local coordinates in which the Lagrangian is of the form \eqref{linlag}. The Euler-Lagrange equations are first order ODEs:
\begin{equation}\label{EL1}
	\dot{q} =  \left( \left(\der{p}{q}\right)^T  -  \der{p}{q} \right)^{-1} \nabla V ,
\end{equation}
where $\nabla V = \left( \der{V}{q} \right)^T$ is the gradient of $V$.

Note that Equation \eqref{degeneracy} implies that $Q$ is even-dimensional, hence $Q$ admits a (local) symplectic structure. Instead of a symplectic form on $T^*Q$, the Lagrangian system preserves a symplectic form on $Q$ itself \cite{bergvelt1985poisson,rowley2002variational}:
\begin{align}\label{symp}
	\omega = \sum_i - \d p_i(q) \wedge \d q_i 
	&= \sum_{i,j} - \der{p_i}{q_j} \,\d q_j \wedge \d q_i \\
	&= \sum_{i < j} \left(\der{p_i}{q_j} - \der{p_j}{q_i} \right) \d q_i \wedge \d q_j , \notag
\end{align}
which is non-degenerate by virtue of Equation \eqref{degeneracy}.

\begin{prop}\label{prop-linlag}
	The Euler-Lagrange equation \eqref{EL1} of the Lagrangian \eqref{linlag} corresponds to a Hamiltonian vector field with respect to the symplectic structure $\omega$, with Hamilton function $V$.
\end{prop}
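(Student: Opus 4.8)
The plan is to show directly that the velocity field prescribed by the Euler-Lagrange equation \eqref{EL1} is the unique vector field $X$ on $Q$ satisfying $\iota_X\omega = \d V$, which is the defining relation of the Hamiltonian vector field of $V$ with respect to the symplectic form $\omega$. First I would record that $\omega$ is genuinely symplectic: it is exact, $\omega = -\d\!\left(\sum_i p_i(q)\,\d q_i\right)$, hence closed, and it is nondegenerate precisely by the assumption \eqref{degeneracy}. So the Hamiltonian vector field of $V$ exists and is unique, and it suffices to verify the defining relation for the particular $X$ coming from \eqref{EL1}.

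Next I would set up convenient matrix notation. Writing $A = \der{p}{q}$ for the Jacobian $A_{ij} = \der{p_i}{q_j}$ and $\Omega = A - A^T$ for its antisymmetric part, the symplectic form becomes $\omega = \frac12\sum_{i,j}\Omega_{ij}\,\d q_i\wedge\d q_j$, and the Euler-Lagrange equation \eqref{EL1} reads compactly as $\Omega\,\dot q = -\nabla V$ (equivalently $\dot q = -\Omega^{-1}\nabla V$, where the inverse exists by \eqref{degeneracy}). The flow is generated by $X = \sum_k \dot q_k\,\der{}{q_k}$.

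The heart of the proof is the contraction $\iota_X\omega$. Using $\iota_X(\d q_i\wedge\d q_j) = \dot q_i\,\d q_j - \dot q_j\,\d q_i$ together with the antisymmetry $\Omega_{ij} = -\Omega_{ji}$, I would collect terms to obtain $\iota_X\omega = \sum_k\big(\sum_i\Omega_{ik}\dot q_i\big)\,\d q_k = -\sum_k(\Omega\dot q)_k\,\d q_k$. Substituting the Euler-Lagrange relation $\Omega\dot q = -\nabla V$ then gives $\iota_X\omega = \sum_k(\nabla V)_k\,\d q_k = \d V$, which is exactly the defining equation of the Hamiltonian vector field of $V$. Hence $X$ is Hamiltonian with Hamilton function $V$, as claimed.

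There is no serious obstacle here; the proposition is essentially a bookkeeping identity once the right notation is in place. The only points that require care are the sign conventions — keeping track of the antisymmetrization in passing from $\der{p}{q}$ to $\Omega$, and matching the sign in the defining relation $\iota_X\omega = \d V$ to whatever convention for Hamiltonian vector fields is used in the rest of the paper (a global sign in $\omega$ or in $V$ merely flips it harmlessly). Invertibility of $\Omega$, needed both to write \eqref{EL1} and to guarantee uniqueness of the Hamiltonian vector field, is supplied throughout by the nondegeneracy assumption \eqref{degeneracy}.
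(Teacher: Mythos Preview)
Your proposal is correct and follows essentially the same approach as the paper's proof: both compute $\iota_X\omega$ explicitly in terms of the antisymmetric matrix $\der{p}{q}-\big(\der{p}{q}\big)^T$ and match it against $\d V$. The only cosmetic difference is direction---the paper starts from $\iota_X\omega=\d V$ and solves for $X$ to recover \eqref{EL1}, whereas you plug the Euler--Lagrange vector field into $\iota_X\omega$ and verify $\d V$---but the underlying computation and sign bookkeeping are identical.
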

\begin{proof}
	The Hamiltonian vector field $X = \sum_i X_i \der{}{q_i}$ of the Hamilton function $V$ with respect to $\omega$ satisfies
	\[ \iota_X \omega = \d V, \]
	where 
	\[ \iota_X \omega = \sum_i \sum_{j \neq i} \left( \der{p_j}{q_i} - \der{p_i}{q_j} \right) X_j \, \d q_i \]
	and 
	\[ \d V = \sum_i \der{V}{q_i} \, \d q_i . \]
	Hence
	\[ X = \left( \left(\der{p}{q}\right)^T  -  \der{p}{q} \right)^{-1} \nabla V ,\]
	which is the vector field corresponding to the Euler-Lagrange equation \eqref{EL1}.
\end{proof}

\subsection{From pluri-Lagrangian to Hamiltonian systems}
\label{sec-to-ham1} 

On a finite-dimensional real vector space $Q$, consider a Lagrangian 1-form $\cL = \sum_i \cL_i \,\d t_i$ consisting of a mechanical Lagrangian 
\begin{equation}\label{L-1-1}
	\cL_1(q,q_1) = \frac{1}{2} |q_1|^2 - V_1(q),
\end{equation}
where $|q_1|^2 = q_1^T q_1$, and additional coefficients of the form
\begin{equation}\label{L-1-2}
	\cL_i(q,q_1,q_i) = q_1^T q_i - V_i(q,q_1) \qquad \text{for } i \geq 2 ,
\end{equation}
where the indices of $q$ denote partial derivatives, $q_i = q_{t_i} = \frac{\d q}{\d t_i}$, whereas the indices of $\cL$ and $V$ are labels. We have chosen the Lagrangian coefficients such that they share a common momentum $p = q_1$, which is forced upon us by the multi-time Euler-Lagrange equation \eqref{el2}. Note that for each $i$, the coefficient $\cL_i$ contains derivatives of $q$ with respect to $t_1$ and $t_i$ only. Many Lagrangian 1-forms are of this form, including the Toda hierarchy, presented in  Example \ref{ex-toda}.

The nontrivial multi-time Euler-Lagrange equations are
\[ \var{1}{\cL_1}{q} = 0 \quad \Leftrightarrow \quad q_{11} = - \der{V_1}{q}, \]
and
\[ \var{i}{\cL_i}{q_1} = 0 \quad \Leftrightarrow \quad q_{i} = \der{V_i}{q_1} \qquad \qquad \text{for } i \geq 2, \]
with the additional condition that
\[\var{i}{\cL_i}{q} = 0 \quad \Leftrightarrow \quad q_{1i} + \der{V_i}{q} = 0.\]
Hence the multi-time Euler-Lagrange equations are overdetermined. Only for particular choices of $V_i$ will the last equation be a differential consequence of the other multi-time Euler-Lagrange equations. The existence of suitable $V_i$ for a given hierarchy could be taken as a definition of its integrability.

Note that there is no multi-time Euler-Lagrange equation involving the variational derivative
\[\var{1}{\cL_i}{q} = \der{V_i}{q} - \frac{\d}{\d t_1} \der{V_i}{q_1} \]
because of the mismatch between the time direction $t_1$ in which the variational derivative acts and the index $i$ of the Lagrangian coefficient. The multi-time Euler-Lagrange equations of the type 
\[ \var{i}{\cL_i}{q_i} = \var{j}{\cL_j}{q_j} \]
all reduce to the trivial equation $q_1 = q_1$, expressing the fact that all $\cL_i$ yield the same momentum.

Since $\cL_1$ is regular, $\det \left( \der{^2 \cL_1}{q_1^2} \right) \neq 0$, we can find a canonical Hamiltonian for the first equation by Legendre transformation,
\[ H_1(q,\pi) = \frac{1}{2} |\pi|^2 + V_1(q), \]
where we use $\pi$ to denote the cotangent space coordinate and $|\pi|^2 = \pi^T \pi$.

For $i \geq 2$ we consider $r = q_1$ as a second dependent variable. In other words, we double the dimension of the configuration space, which is now has coordinates $(q,r) = (q, q_1)$. The Lagrangians $\cL_i(q,r,q_i,r_i) = r q_i - V_i(q,r)$ are linear in velocities. We have $p(q,r) = r$, hence the symplectic form \eqref{symp} is
\[ \omega = \d r \wedge \d q . \]
This is the canonical symplectic form, with the momentum replaced by $r = q_1$. Hence we can consider $r$ as momentum, thus identifying the extended configuration space spanned by $q$ and $r$ with the phase space $T^*Q$. 

Applying Proposition \ref{prop-linlag}, we arrive at the following result:

\begin{thm}\label{thm-1form}
	The multi-time Euler-Lagrange equations of a 1-form with coefficients \eqref{L-1-1}--\eqref{L-1-2} are equivalent, under the identification $\pi = q_1$, to a system of Hamiltonian equations with respect to the canonical symplectic form $\omega = \d \pi \wedge \d q$, with Hamilton functions
	\[ H_1(q,\pi) = \frac{1}{2} |\pi|^2 + V_1(q) \qquad \text{and} \qquad H_i(q,\pi) = V_i(q,\pi) \quad \text{for } i \geq 2 \]
\end{thm}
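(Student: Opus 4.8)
The plan is to establish the equivalence equation by equation: I would match the canonical Hamilton equations of each $H_i$ against the nontrivial multi-time Euler--Lagrange equations already extracted above, treating the mechanical coefficient $\cL_1$ and the velocity-linear coefficients $\cL_i$ (for $i \geq 2$) separately. Since the symplectic structure and the doubling of the configuration space to $(q,r) = (q,q_1)$ have already been set up, the essential work is to check that, under the identification $\pi = q_1$, these two families of equations coincide, with nothing lost or gained.

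For the first coefficient I would use the ordinary Legendre transform. As $\cL_1$ is regular, its canonical Hamiltonian is $H_1(q,\pi) = \frac{1}{2}|\pi|^2 + V_1(q)$, and Hamilton's equations in the $t_1$-direction read $q_1 = \der{H_1}{\pi} = \pi$ together with $\pi_1 = -\der{H_1}{q} = -\der{V_1}{q}$. The first equation is the defining identification $\pi = q_1$ and carries no dynamics; the second, after substituting $\pi = q_1$, becomes $q_{11} = -\der{V_1}{q}$, which is exactly the multi-time Euler--Lagrange equation $\var{1}{\cL_1}{q} = 0$.

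For $i \geq 2$ I would invoke Proposition \ref{prop-linlag} directly rather than recompute. On the doubled configuration space the coefficient $\cL_i = r^T q_i - V_i(q,r)$ is linear in the velocities, with conjugate momentum $p(q,r) = r$, and the antisymmetric part of its Jacobian is the standard, invertible symplectic matrix, so the nondegeneracy hypothesis \eqref{degeneracy} holds. Proposition \ref{prop-linlag} then states that the Euler--Lagrange equations of $\cL_i$ are precisely the Hamiltonian vector field of $V_i$ with respect to the form $\omega = \d r \wedge \d q$ from \eqref{symp}; identifying $r = \pi$ turns this into the canonical form $\omega = \d\pi \wedge \d q$ and gives $H_i = V_i$. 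Concretely, the two scalar Euler--Lagrange equations of $\cL_i$ on the doubled space, one from varying $q$ and one from varying $r$, are $q_{1i} = -\der{V_i}{q}$ and $q_i = \der{V_i}{q_1}$, i.e.\ the multi-time equations $\var{i}{\cL_i}{q} = 0$ and $\var{i}{\cL_i}{q_1} = 0$.

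The point that needs care---more bookkeeping than genuine obstacle---is to confirm that this is a true equivalence, i.e.\ that the Hamilton equations capture all and only the nontrivial multi-time Euler--Lagrange equations. I would check three things. First, the pair of Hamilton equations for each $i \geq 2$ absorbs both $\var{i}{\cL_i}{q} = 0$ and $\var{i}{\cL_i}{q_1} = 0$, which are the only nontrivial equations attached to $\cL_i$. Second, there is no Hamilton equation corresponding to $\var{1}{\cL_i}{q}$, matching the fact---noted above---that no such multi-time Euler--Lagrange equation exists, owing to the mismatch between the direction $t_1$ of the variational derivative and the label $i$ of the coefficient. Third, the equations of type \eqref{el2}, which here all collapse to $q_1 = q_1$, impose nothing beyond the global consistency of the single momentum $\pi = q_1$ shared by every flow. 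Since each correspondence is an equality of equations, it runs in both directions, which yields the claimed equivalence.
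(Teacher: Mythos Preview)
Your proposal is correct and follows essentially the same approach as the paper: a Legendre transform for the regular coefficient $\cL_1$, and an application of Proposition~\ref{prop-linlag} on the doubled configuration space $(q,r)=(q,q_1)$ for the velocity-linear coefficients $\cL_i$, $i\geq 2$. The paper in fact states the theorem immediately after setting up this doubling and writes only ``Applying Proposition~\ref{prop-linlag}, we arrive at the following result''; your additional bookkeeping---verifying that the Hamilton equations for each $i$ match exactly the nontrivial multi-time Euler--Lagrange equations and that the equations of type~\eqref{el2} are vacuous---makes explicit what the paper leaves to the reader.
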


\begin{example}
	From the Lagrangian 1-form for the Toda lattice given in Example \ref{ex-toda} we find
	\begin{align*}
		H_1 &= \sum_k \left( \frac{1}{2} \left( \pi^{[k]} \right)^2 + \exp\!\left( \q{k} - \q{k-1} \right) \right) , \\
		H_2 &= \sum_k \left( \frac{1}{3}\left( \pi^{[k]} \right)^3 + \left( \pi^{[k]}  + \pi^{[k-1]}  \right) \exp\!\left( \q{k} - \q{k-1} \right) \right) , \\
		H_3 &= \sum_k \bigg( \frac{1}{4} \left( \pi^{[k]} \right)^{4} + \left( \left( \pi^{[k+1]}  \right)^2 + \pi^{[k+1]}  \pi^{[k]} + \left( \pi^{[k]} \right)^2 \right) \exp\!\left( \q{k+1} - \q{k} \right) \\
		&\hspace{1.5cm} + \exp\!\left( \q{k+2} - \q{k} \right) + \frac{1}{2} \exp\!\left( 2 (\q{k+1} - \q{k}) \right) \bigg) ,\\
		&\vdotswithin{=}
	\end{align*}
\end{example}

We have limited the discussion in this section to the case where $\cL_1$ is quadratic in the velocity. There are some interesting examples that do not fall into this category, like the Volterra lattice, which has a Lagrangian linear in velocities, and the relativistic Toda lattice, which has a Lagrangian with a more complicated dependence on velocities (see e.g.\@ \cite{suris2003problem} and the references therein). The discussion above can be adapted to other types of Lagrangian 1-forms if one of its coefficients $\cL_i$ has an invertible Legendre transform, or if they are collectively Legendre-transformable as described in \cite{suris2013variational}. 

\subsection{From Hamiltonian to Pluri-Lagrangian systems}

The procedure from Section \ref{sec-to-ham1} can be reversed to construct a Lagrangian 1-form from a number of Hamiltonians.

\begin{thm}\label{thm-ham-to-lag}
	Consider Hamilton functions $H_i: T^*Q \rightarrow \R$, with $H_1(q,\pi) = \frac{1}{2} |\pi|^2 + V_1(q)$. Then the multi-time Euler-Lagrange equations of the Lagrangian 1-form $\sum_i \cL_i \, \d t_i$ with
	\begin{align*}
		\cL_1 &= \frac{1}{2} |q_1|^2 - V_1(q) \\
		\cL_i &= q_1 q_i - H_i(q,q_1) \qquad \text{for } i \geq 2
	\end{align*}
	are equivalent to the Hamiltonian equations under the identification $\pi = q_1$.
\end{thm}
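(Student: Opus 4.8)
The plan is to recognise that this statement is simply the reverse reading of Theorem~\ref{thm-1form}: the Lagrangian 1-form written down here is exactly the special case of the general coefficients \eqref{L-1-1}--\eqref{L-1-2} obtained by taking the functions $V_i$ to be the prescribed Hamiltonians. First I would match the two forms term by term. The coefficient $\cL_1 = \frac{1}{2}|q_1|^2 - V_1(q)$ is literally \eqref{L-1-1} with the same $V_1$, and each $\cL_i = q_1 q_i - H_i(q,q_1)$ for $i \geq 2$ is of the shape \eqref{L-1-2} under the identification $V_i(q,q_1) = H_i(q,q_1)$.

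Having made this identification, I would invoke Theorem~\ref{thm-1form} directly. It asserts that the multi-time Euler-Lagrange equations of a 1-form with coefficients \eqref{L-1-1}--\eqref{L-1-2} are equivalent, under $\pi = q_1$, to the canonical Hamiltonian system with Hamilton functions $\frac{1}{2}|\pi|^2 + V_1(q)$ and $V_i(q,\pi)$ for $i \geq 2$. Substituting back $V_i = H_i$, these recovered Hamilton functions are precisely the given $H_1(q,\pi) = \frac{1}{2}|\pi|^2 + V_1(q)$ and $H_i(q,\pi)$, so the multi-time Euler-Lagrange equations of the constructed 1-form coincide with the prescribed Hamiltonian equations. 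This is the whole content, so the proof reduces to this bookkeeping.

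The one point deserving care --- and the only place where the hypotheses genuinely intervene --- is the coefficient $\cL_1$. The assumption $H_1(q,\pi) = \frac{1}{2}|\pi|^2 + V_1(q)$ is exactly what makes the Legendre transform of $\cL_1$ (with canonical momentum $\pi = q_1$) return this $H_1$; for a general $H_1$ one could not write the first Lagrangian coefficient in the simple mechanical form \eqref{L-1-1}. For the remaining indices no such restriction appears: since each $\cL_i$ is linear in the velocity $q_i$, the construction of Section~\ref{sec-to-ham1} treats $r = q_1$ as the momentum in a doubled configuration space, identifies the canonical symplectic form $\omega = \d\pi \wedge \d q$, and reads the Hamiltonian off directly as $H_i = V_i$. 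I would also note that the claimed equivalence is an identity between two systems of differential equations and therefore requires no commutativity or involution of the $H_i$; such conditions only become relevant for the \emph{consistency} (existence of common solutions) of the overdetermined system, not for the equivalence itself.
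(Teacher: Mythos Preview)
Your proposal is correct. The approach differs only in packaging: the paper proves Theorem~\ref{thm-ham-to-lag} by directly recomputing the multi-time Euler-Lagrange equations of type \eqref{el1} and checking that those of type \eqref{el2} are trivially satisfied, whereas you observe that the given 1-form is exactly of the shape \eqref{L-1-1}--\eqref{L-1-2} with $V_i = H_i$ and invoke Theorem~\ref{thm-1form} as a black box. Since the discussion leading to Theorem~\ref{thm-1form} already carried out precisely that computation, your route avoids repetition and is arguably cleaner; the paper's direct calculation has the minor advantage of making the individual Hamiltonian equations $q_i = \partial H_i/\partial \pi$ and $\pi_i = -\partial H_i/\partial q$ explicit. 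Your remark about commutativity not being required also matches the paper's commentary following the proof.
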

\begin{proof}
	Identifying $\pi = q_1$, the multi-time Euler-Lagrange equations of the type \eqref{el1} are	
	\begin{align*}
		\var{1}{\cL_1}{q} &= 0 \quad\Leftrightarrow\quad q_{11} = - \der{V_1(q)}{q} , \\
		\var{i}{\cL_i}{q_1} &= 0 \quad\Leftrightarrow\quad q_i = \der{H_i(q,\pi)}{p}, \\
		\var{i}{\cL_i}{q} &= 0 \quad\Leftrightarrow\quad \pi_i = -\der{H_i(q,\pi)}{q}.
	\end{align*}
	The multi-time Euler-Lagrange equations of the type \eqref{el2} are trivially satisfied because
	\[ \var{i}{\cL_i}{q_i} = q_1 \]
	for all $i$.
\end{proof}
Note that the statement of Theorem \ref{thm-ham-to-lag} does not require the Hamiltonian equations to commute, i.e.\@ it is not imposed that the Hamiltonian vector fields $X_{H_i}$ associated to the Hamilton functions $H_i$ satisfy $[X_{H_i},X_{H_j}] = 0$. However, if they do not commute then for a generic initial condition $(q_0,\pi_0)$ there will be no solution $(q,\pi): \R^N \rightarrow T^*Q$ to the equations
\begin{align*}
	& \der{}{t_i} (q(t_1,\ldots t_N),\pi(t_1,\ldots t_N)) = X_{H_i} (q(t_1,\ldots t_N),\pi(t_1,\ldots t_N)) \qquad (i = 1, \ldots, N),\\
	&(q(0,\ldots,0),\pi(0,\ldots,0)) = (q_0,\pi_0) .
\end{align*}
Hence the relevance of Theorem \ref{thm-ham-to-lag} lies almost entirely in the case of commuting Hamiltonian equations. If they do not commute then it is an (almost) empty statement because neither the system of Hamiltonian equations nor the multi-time Euler-Lagrange equations will have solutions for generic initial data.

\begin{example}
	The Kepler Problem, describing the motion of a point mass around a gravitational center, is one of the classic examples of a completely integrable system. It possesses Poisson-commuting Hamiltonians $H_1,H_2,H_3: T^*\R^3 \rightarrow \R$ given by
	\begin{align*}
		H_1(q,\pi) &= \frac{1}{2} |\pi|^2 - |q|^{-1}, \quad&& \text{the energy, Hamiltonian for the physical motion,} \\
		H_2(q,\pi) &= (q \times \pi) \cdot \mathsf{e}_z, && \text{the 3rd component of the angular momentum, and} \\
		H_3(q,\pi) &= |q \times \pi|^2 , && \text{the squared magnitude of the angular momentum,}
	\end{align*}
	where $q = (x,y,z)$ and $\mathsf{e}_z$ is the unit vector in the $z$-direction. The corresponding coefficients of the Lagrangian 1-form  are
	\begin{align*}
		\cL_1 &= \frac{1}{2} |q_1|^2 + |q|^{-1} , \\
		\cL_2 &= q_1 \cdot q_2 - (q \times q_1)\cdot \mathsf{e}_z , \\
		\cL_3 &= q_1 \cdot q_3 - |q \times q_1|^2 .
	\end{align*}
	The multi-time Euler-Lagrange equations are	
	\[ \var{1}{\cL_1}{q} = 0 \quad\Rightarrow\quad q_{11} = \frac{q}{|q|^3},\]
	the physical equations of motion,
	\begin{align*}
		\var{2}{\cL_2}{q_1} = 0 &\quad\Rightarrow\quad q_2 = \mathsf{e}_z \times q, \\
		\var{2}{\cL_2}{q} = 0 &\quad\Rightarrow\quad q_{12} = - q_1 \times \mathsf{e}_z,\
	\end{align*}
	describing a rotation around the $z$-axis, and
	\begin{align*} 
		\var{3}{\cL_3}{q_1} = 0 &\quad\Rightarrow\quad q_3 = 2 (q \times q_1) \times q, \\
		\var{3}{\cL_3}{q} = 0 &\quad\Rightarrow\quad q_{13} = 2 (q \times q_1) \times q_1,
	\end{align*}
	describing a rotation around the angular momentum vector.
\end{example}

\subsection{Closedness and involutivity}
\label{sec-closed1}

In the pluri-Lagrangian theory, the exterior derivative $\d \cL$ is constant on solutions (see Proposition \ref{prop-almost-closed} in the Appendix). In many cases this constant is zero, i.e.\@ the Lagrangian 1-form is closed on solutions. Here we relate this property to the vanishing of Poisson brackets between the Hamilton functions.

\begin{prop}[{\cite[Theorem 3]{suris2013variational}}]
	Consider a Lagrangian 1-form $\cL$ as in Section \ref{sec-to-ham1} and the corresponding Hamilton functions $H_i$. On solutions to the multi-time Euler-Lagrange equations, and identifying $\pi = p(q,q_1) = \der{\cL_i}{q_i}$, there holds
	\begin{equation}\label{closedness1}
		\begin{split}
			\frac{\d \cL_j}{\d t_i} - \frac{\d \cL_i}{\d t_j} &= p_j q_i - p_i q_j  \\
			&= \{ H_j, H_i \} ,
		\end{split}
	\end{equation}
	where $ \{ \cdot , \cdot \}$ denotes the canonical Poisson bracket and $p_j$ and $q_j$ are shorthand for $\frac{\d p}{\d t_j}$ and $\frac{\d q}{\d t_j}$.
\end{prop}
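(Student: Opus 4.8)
The plan is to collapse the two structurally different cases ($k=1$, where $\cL_1$ is quadratic in the velocity, and $k\geq 2$, where $\cL_k$ is linear) into a single computation. Under the identification $\pi = q_1$, the Legendre transform lets me write every coefficient uniformly as
\[ \cL_k = \pi^T q_k - H_k(q,\pi). \]
For $k\geq 2$ this is immediate from \eqref{L-1-2} together with $H_k = V_k$; for $k=1$ it holds as well, since $\pi^T q_1 - H_1 = |\pi|^2 - \big(\tfrac{1}{2}|\pi|^2 + V_1\big) = \tfrac{1}{2}|q_1|^2 - V_1 = \cL_1$. Having this one formula is what makes the argument short.

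Next I would compute the total derivative $\frac{\d\cL_j}{\d t_i}$, treating $\cL_j$ as a jet function and using $\frac{\d\pi}{\d t_i} = q_{1i} = p_i$:
\[ \frac{\d\cL_j}{\d t_i} = p_i^T q_j + \pi^T q_{ij} - \der{H_j}{q}\,q_i - \der{H_j}{\pi}\,p_i. \]
On solutions the multi-time Euler-Lagrange equations are precisely Hamilton's equations $q_k = \der{H_k}{\pi}$ and $p_k = -\der{H_k}{q}$ for every $k$ (Theorem \ref{thm-1form}; for $k=1$ these are the identification $\pi = q_1$ together with $p_1 = q_{11} = -\partial V_1/\partial q$). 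Substituting $\der{H_j}{\pi} = q_j$ and $\der{H_j}{q} = -p_j$ cancels the two terms containing $q_j$ and leaves the compact expression $\frac{\d\cL_j}{\d t_i} = \pi^T q_{ij} + p_j^T q_i$.

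The first equality of \eqref{closedness1} then follows by antisymmetrizing: interchanging $i$ and $j$ and subtracting, the symmetric term $\pi^T q_{ij}=\pi^T q_{ji}$ drops out, giving $\frac{\d\cL_j}{\d t_i} - \frac{\d\cL_i}{\d t_j} = p_j^T q_i - p_i^T q_j$. For the second equality I would substitute Hamilton's equations once more, writing $p_j = -\der{H_j}{q}$, $q_i = \der{H_i}{\pi}$ and the analogues with $i,j$ swapped, so that $p_j q_i - p_i q_j$ becomes exactly the combination of products of first partials of $H_i$ and $H_j$ defining the canonical Poisson bracket $\{H_j, H_i\}$.

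I expect the difficulty to be bookkeeping rather than conceptual. The two points needing care are: (i) checking that the uniform Legendre form genuinely covers $k=1$, so that the quadratic coefficient need not be treated separately; and (ii) tracking signs in the final step, since whether one lands on $\{H_j,H_i\}$ or $\{H_i,H_j\}$ is fixed by the ordering conventions in both Hamilton's equations and the Poisson bracket. Once the uniform form is established and the Hamilton equations are substituted in the right places, both equalities reduce to a one-line cancellation.
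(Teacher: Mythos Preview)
Your argument is correct and, if anything, slightly more streamlined than the paper's. The key difference is in how the two equalities are obtained. The paper computes $\frac{\d\cL_j}{\d t_i}$ twice: first purely on the Lagrangian side, expanding via the chain rule in $(q,q_1,q_j)$ and invoking the multi-time Euler--Lagrange equations $\var{j}{\cL_j}{q}=0$ and $\var{j}{\cL_j}{q_1}=0$ to reach $p_j q_i + p\,q_{ij}$; antisymmetrizing gives the first equality. Then, independently, it rewrites $\cL_j = p q_j - H_j$ and differentiates again, obtaining $p_i q_j - p_j q_i + 2\{H_j,H_i\}$; equating the two expressions for $\frac{\d\cL_j}{\d t_i} - \frac{\d\cL_i}{\d t_j}$ yields the Poisson bracket identity. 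You instead pass to the uniform Legendre form $\cL_k = \pi^T q_k - H_k$ at the outset (your check that this covers $k=1$ is the main point of care, and it is fine), differentiate once, and substitute Hamilton's equations from Theorem~\ref{thm-1form} directly to get $\pi^T q_{ij} + p_j^T q_i$; both equalities then fall out by antisymmetrizing and one further substitution. Your route is shorter and avoids the ``compute twice and compare'' structure, at the cost of relying on Theorem~\ref{thm-1form} rather than the raw multi-time Euler--Lagrange equations; the paper's route keeps the first equality a purely Lagrangian statement. Your caution about the sign convention in the last step is warranted: the paper's convention gives $\{H_j,H_i\} = \der{H_j}{\pi}\der{H_i}{q} - \der{H_j}{q}\der{H_i}{\pi}$, and with $q_k = \der{H_k}{\pi}$, $p_k = -\der{H_k}{q}$ this is exactly $p_j q_i - p_i q_j$.
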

\begin{proof}
	On solutions of the multi-time Euler-Lagrange equations there holds
	\begin{align*}
		\frac{\d \cL_j}{\d t_i} &= \der{\cL_j}{q} q_i + \der{\cL_j}{q_1} q_{1i} + \der{\cL_j}{q_j} q_{ij} \\
		&= \left( \frac{\d}{\d t_j} \der{\cL_j}{q} \right) q_i + \der{\cL_j}{q_j} q_{ij} \\
		&= p_j q_i + p q_{ij}.
	\end{align*}
	Hence
	\begin{equation}\label{dL1}
		\frac{\d \cL_j}{\d t_i} - \frac{\d \cL_i}{\d t_j} = p_j q_i - p_i q_j .
	\end{equation}
	Alternatively, we can calculate this expression using the Hamiltonian formalism. We have
	\begin{align*}
		\frac{\d \cL_j}{\d t_i} - \frac{\d \cL_i}{\d t_j}
		&= \frac{\d}{\d t_i} (p q_j - H_j) - \frac{\d}{\d t_j} ( p q_i - H_i) \\
		&= p_i q_j - p_j q_i + 2 \{ H_j, H_i \} .
	\end{align*}
	Combined with Equation \eqref{dL1}, this implies Equation \eqref{closedness1}.
\end{proof}

As a corollary we have:
\begin{thm}\label{thm-dL0-1}
	The Hamiltonians $H_i$ from Theorem \ref{thm-1form} are in involution if and only if $\d \cL = 0$ on solutions.
\end{thm}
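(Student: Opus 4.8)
The plan is to reduce the statement to a coordinate computation of $\d \cL$ and then read off its coefficients directly from the preceding Proposition. Since $\cL = \sum_i \cL_i \, \d t_i$ and the exterior derivative of each coefficient function is $\d \cL_i = \sum_j \frac{\d \cL_i}{\d t_j}\, \d t_j$ (with $\frac{\d}{\d t_j}$ the total derivative along multi-time), I would first expand
\[ \d \cL = \sum_i \d \cL_i \wedge \d t_i = \sum_{i<j} \left( \frac{\d \cL_j}{\d t_i} - \frac{\d \cL_i}{\d t_j} \right) \d t_i \wedge \d t_j, \]
where the two contributions to each basis 2-form $\d t_i \wedge \d t_j$ have been collected using antisymmetry of the wedge product. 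The point is that the coefficient of $\d t_i \wedge \d t_j$ is exactly the quantity appearing on the left-hand side of Equation \eqref{closedness1}.

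Next I would invoke the preceding Proposition, which states that on solutions of the multi-time Euler-Lagrange equations each of these coefficients equals a Poisson bracket, $\frac{\d \cL_j}{\d t_i} - \frac{\d \cL_i}{\d t_j} = \{ H_j, H_i \}$. Substituting this into the expansion above gives, on solutions,
\[ \d \cL = \sum_{i<j} \{ H_j, H_i \} \, \d t_i \wedge \d t_j. \]

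To finish, I would use the linear independence of the basis 2-forms $\d t_i \wedge \d t_j$ with $i<j$: the form $\d \cL$ vanishes on solutions if and only if every coefficient $\{ H_j, H_i \}$ vanishes, which is precisely the condition that the $H_i$ are pairwise in involution. This delivers both implications simultaneously, so no separate arguments for the two directions are needed.

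The only point requiring care — rather than a genuine obstacle — is the interpretation of the phrase \emph{on solutions}: here $\d$ is the exterior derivative on multi-time $M$ applied to the pulled-back form $\cL\llbracket u \rrbracket$, so that the total derivatives $\frac{\d}{\d t_j}$ in the coefficients are understood as evaluated along the solution $u$. With this reading fixed, the equivalence is immediate from the Proposition, and the statement is a genuine corollary with no further computation required.
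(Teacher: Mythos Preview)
Your proposal is correct and matches the paper's approach exactly: the paper presents this theorem as an immediate corollary of the preceding Proposition without further argument, and you have simply written out the obvious expansion of $\d\cL$ in coordinates and applied Equation~\eqref{closedness1} coefficient-wise, which is precisely the intended reasoning.
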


All examples of Lagrangian 1-forms discussed so far satisfy $\d \cL = 0$ on solutions. This need not be the case.

\begin{example}
	Let us consider a system of commuting equations that is not Liouville integrable. Fix a constant $c \neq 0$ and consider the 1-form $\cL = \cL_1 \,\d t_1 + \cL_2 \,\d t_2$ with
	\[ \cL_1 \llbracket r,\theta \rrbracket = \frac{1}{2} r^2 \theta_1^2 + \frac{1}{2} r_1^2 - V(r) - c \theta, \]
	which for $c = 0$ would describe a central force in the plane governed by the potential $V$, and
	\[ \cL_2 \llbracket r,\theta \rrbracket = r^2 \theta_1 (\theta_2 - 1) + r_1 r_2 . \]
	Its multi-time Euler-Lagrange equations are 
	\begin{align*}
		&r_{11} = -V'(r) + r \theta_1^2 , \\
		&\frac{\d}{\d t_1} (r^2 \theta_1) = -c, \\
		&r_2 = 0, \\
		&\theta_2 = 1,
	\end{align*}
	and consequences thereof. Notably, we have
	\[ \frac{\d \cL_2}{\d t_1} - \frac{\d \cL_1}{\d t_2} = c \]
	on solutions, hence $\d \cL$ is nonzero.
	
	By Theorem \ref{thm-1form} the multi-time Euler-Lagrange equations are equivalent to the canonical Hamiltonian systems with
	\begin{align*}
		H_1(r,\theta,\pi,\sigma) &= \frac{1}{2} \frac{\sigma^2}{r^2} + \frac{1}{2} \pi^2 + V(r) + c \theta \\
		H_2(r,\theta,\pi,\sigma) & = \sigma ,
	\end{align*}
	where $\pi$ and $\sigma$ are the conjugate momenta to $r$ and $\theta$. The Hamiltonians are not in involution, but rather
	\[ \{H_2,H_1\} = c = \frac{\d \cL_2}{\d t_1} - \frac{\d \cL_1}{\d t_2} . \]
\end{example}

\section{Hamiltonian structure of Lagrangian 2-form systems}
\label{sec-2form}

In order to generalize the results from Section \ref{sec-1form} to the case of 2-forms, we need to carefully examine the relevant geometric structure. A useful tool for this is the variational bicomplex, which is also used in Appendix \ref{sec-appendix} to study the multi-time Euler-Lagrange equations.

\subsection{The variational bicomplex}
\label{sec-varbi} 

To facilitate the variational calculus in the pluri-Lagrangian setting, it is useful to consider the variation operator $\delta$ as an exterior derivative, acting in the fiber $J^\infty$ of the infinite jet bundle. We call $\delta$ the \emph{vertical exterior derivative} and $\d$, which acts in the base manifold $M$, the \emph{horizontal exterior derivative}. Together they provide a double grading of the space $\Omega(M \times J^\infty)$ of differential forms on the jet bundle. The space of \emph{$(a,b)$-forms} is generated by those $(a+b)$-forms structured as
\[ f\llbracket u \rrbracket \, \delta u_{I_1} \wedge \ldots \wedge \delta u_{I_a} \wedge \d t_{j_1} \ldots \wedge \d t_{j_b} . \]
We denote the space of $(a,b)$-forms by $\Omega^{(a,b)} \subset \Omega^{a+b}(M \times J^\infty)$. We call elements of $\Omega^{(0,b)}$  horizontal forms and elements of $\Omega^{(a,0)}$ vertical forms. The Lagrangian is a horizontal $d$-form, $\cL \in \Omega^{(0,d)}$.

The horizontal and vertical exterior derivatives are characterized by the anti-derivation property,
\begin{align*}
	\d \left( \omega_1^{p_1,q_1} \wedge \omega_2^{p_2,q_2} \right) 
	&= \d \omega_1^{p_1,q_1} \wedge \omega_2^{p_2,q_2} + (-1)^{p_1+q_1} \, \omega_1^{p_1,q_1} \wedge \d \omega_2^{p_2,q_2} , 
	\\
	\delta \left( \omega_1^{p_1,q_1} \wedge \omega_2^{p_2,q_2} \right) 
	&= \delta \omega_1^{p_1,q_1} \wedge \omega_2^{p_2,q_2} + (-1)^{p_1+q_1} \, \omega_1^{p_1,q_1} \wedge \delta \omega_2^{p_2,q_2} ,
\end{align*}
where the upper indices denote the type of the forms, and by the way they act on $(0,0)$-forms, and basic $(1,0)$ and $(0,1)$-forms:
\begin{alignat*}{3}
	\d f\llbracket u \rrbracket &= \sum_j \partial_j f\llbracket u \rrbracket \,\d t_j ,
	& \delta f\llbracket u \rrbracket &= \sum_I \der{f\llbracket u \rrbracket}{u_I} \delta u_{I} ,
	\\
	\d (\delta u_I ) &= - \sum_j \delta u_{Ij} \wedge \d t_j ,
	\hspace{2cm} & \delta(\delta u_I) &= 0 ,
	\\
	\d(\d t_j) &= 0 ,
	& \delta (\d t_j) &= 0 .
\end{alignat*}
One can verify that $\d + \delta: \Omega^{a+b} \rightarrow \Omega^{a+b+1}$ is the usual exterior derivative and that
\[ \delta^2 = \d^2 = \delta \d + \d \delta = 0 . \]
Time-derivatives $\partial_j$ act on vertical forms as $\partial_j(\delta u_I) = \delta u_{Ij}$, on horizontal forms as $\partial_j(\d t_k) = 0$, and obey the Leibniz rule with respect to the wedge product. As a simple but important example, note that
\[ \d (f\llbracket u \rrbracket \, \delta u_I)
= \sum_{j=1}^N \partial_j f\llbracket u \rrbracket \, \d t_j \wedge \delta u_I - f\llbracket u \rrbracket \, \delta u_{I t_j} \wedge \d t_j
= \sum_{j=1}^N - \partial_j (f\llbracket u \rrbracket \, \delta u_I) \wedge \d t_j .
\]

The spaces $\Omega^{(a,b)}$, for $a \geq 0$ and $0 \leq b \leq N$, related to each other by the maps $\d$ and $\delta$, are collectively known as the \emph{variational bicomplex} \cite[Chapter 19]{dickey2003soliton}. A slightly different version  of the variational bicomplex, using contact 1-forms instead of vertical forms, is presented in \cite{anderson1992introduction}. We will not discuss the rich algebraic structure of the variational bicomplex here.

For a horizontal $(0,d)$-form $\cL\llbracket u \rrbracket$, the variational principle
\[  \delta \int_\Gamma \cL\llbracket u \rrbracket = \delta \int_\Gamma \sum_{i_1 < \ldots < i_d} \cL_{i_1,\ldots,i_d}\llbracket u \rrbracket \, \d t_{i_1} \wedge \ldots \wedge \d t_{i_d} =  0 \]
can be understood as follows. Every vertical vector field $V = v(t_1,\ldots,t_a) \frac{\partial}{\partial u}$, such that its \emph{prolongation}
\[ \pr V = \sum_I v_I \frac{\partial}{\partial u_I} \]
vanishes on the boundary $\partial \Gamma$, must satisfy
\[ \int_\Gamma \iota_{\pr V} \delta \cL = \int_\Gamma \sum_{i_1 < \ldots < i_d} \iota_{\pr V} ( \delta \cL_{i_1,\ldots,i_d}\llbracket u \rrbracket ) \, \d t_{i_1} \wedge \ldots \wedge \d t_{i_d} = 0 . \]
Note that the integrand is a horizontal form, so the integration takes place on $\Gamma \subset M$, independent of the bundle structure.

\subsection{The space of functionals and its pre-symplectic structure}

In the rest of our discussion, we will single out the variable $t_1 = x$ and view it as the space variable, as opposed to the time variables $t_2,\ldots,t_N$. For ease of presentation we will limit the discussion here to real scalar fields, but it is easily extended to complex or vector-valued fields. We consider functions $u: \R \rightarrow \R: x \mapsto u(x)$ as fields at a fixed time.  Let $J^\infty$ be the fiber of the corresponding infinite jet bundle, where the prolongation of $u$ has coordinates $[u] = (u,u_x,u_{xx},\ldots)$. Consider the space of functions of the infinite jet of $u$,
\[ \cV =  \left\{ v: J^\infty \rightarrow \R \right\} . \]
Note that the domain $J^\infty$ is the fiber of the jet bundle, hence the elements $v \in \cV$ depend on $x$ only through $u$. We will be dealing with integrals $\int v \,\d x$ of elements $v \in \cV$. In order to avoid convergence questions, we understand the symbol $\int v \,\d x$ as a \emph{formal integral}, defined as the equivalence class of $v$ modulo space-derivatives. In other words, we consider the space of functionals
\[ \cF = \cV \big/ \partial_x \! \cV , \]
where
\[ \partial_x = \frac{\d}{\d x} = \sum_I  u_{Ix} \der{}{u_I} . \]

The variation of an element of $\cF$ is computed as
\begin{equation}\label{ext-der}
	\delta \int v \,\d x = \int \var{}{v}{u} \,\delta u \wedge \d x ,
\end{equation}
where
\[ \var{}{}{u} = \sum_{\alpha = 0}^\infty (-1)^{\alpha} \partial_x^{\alpha} \der{}{u_{x^{\alpha} }} . \]
Equation \eqref{ext-der} is independent of the choice of representative $v \in \cV$ because the variational derivative of a full $x$-derivative is zero.

Since $\cV$ is a linear space, its tangent spaces can be identified with $\cV$ itself. In turn, every $v \in \cV$ can be identified with a vector field $v \frac{\partial}{\partial u}$. We will define Hamiltonian vector fields in terms of $\cF$-valued forms on $\cV$. An $\cF$-valued 1-form $\theta$ can be represented as the integral of a $(1,1)$-form in the variational bicomplex,
\[ \theta = \int\sum_k a_k[u] \, \delta u_{x^k} \wedge \d x \]
and defines a map
\[ \cV \rightarrow \cF: v \mapsto \iota_{v} \theta = \int \sum_k a_k[u] \,\partial_x^k v[u] \,\d x . \]
This amounts to pairing the 1-form with the infinite jet prolongation of the vector field $v \frac{\partial}{\partial u}$. Note that $\cF$-valued forms are defined modulo $x$-derivatives: $\int \partial_x \theta \wedge \d x = 0$ because its pairing with any vector field in $\cV$ will yield a full $x$-derivative, which represents the zero functional in $\cF$. Hence the space of $\cF$-valued 1-forms is $\Omega^{(1,1)} / \partial_x \Omega^{(1,1)}$.

An $\cF$-valued 2-form
\[ \omega = \int \sum_{k,\ell} a_{k,\ell}[u] \, \delta u_{x^k} \wedge \delta u_{x^\ell} \wedge \d x \]
defines a skew-symmetric map
\[ \cV \times \cV \rightarrow \cF: (v,w) \mapsto \iota_w \iota_v \omega = \int \sum_{k,\ell} a_{k,\ell}[u] \left(\partial_x^k v[u] \, \partial_x^\ell w[u] - \partial_x^k w[u] \, \partial_x^\ell v[u] \right) \d x \]
as well as a map from vector fields to $\cF$-valued 1-forms
\[ \cV \rightarrow \Omega^{(1,1)} / \partial_x \Omega^{(1,1)}: v \mapsto \iota_v \omega = \int \sum_{k,\ell} a_{k,\ell}[u] \left(\partial_x^k v[u] \, \delta u_{x^\ell} - \partial_x^\ell v[u] \, \delta u_{x^k} \right) \wedge \d x . \]

\begin{definition}
	A closed $(2,1)$-form $\omega$ on $\cV$ is called \emph{pre-symplectic}.
\end{definition}

Equivalently we can require the form to be vertically closed, i.e.\@ closed with respect to $\delta$. Since the horizontal space is 1-dimensional ($x$ is the only independent variable) every $(a,1)$-form is closed with respect to the horizontal exterior derivative $\d$, so only vertical closedness is a nontrivial property.

We choose to work with pre-symplectic forms instead of symplectic forms, because the non-degeneracy required of a symplectic form is a subtle issue in the present context. Consider for example the pre-symplectic form $\omega = \int \delta u \wedge \delta u_x \wedge \d x$. It is degenerate because
\[ \int \iota_{v} \omega = \int (v \, \delta u_x - v_x \, \delta u) \wedge \d x = \int - 2 v_x \, \delta u \wedge \d x , \]
which is zero whenever $v[u]$ is constant. However, if we restrict our attention to compactly supported fields, then a constant must be zero, so the restriction of $\omega$ to the space of compactly supported fields is non-degenerate.

\begin{definition}
	A \emph{Hamiltonian vector field} with Hamilton functional $\tint H \,\d x$ is an element $v \in \cV$ satisfying the relation
	\[ \int \iota_{v} \omega = \int \delta H \wedge \d x . \]
\end{definition}

Note that if $\omega$ is degenerate, we cannot guarantee existence or uniqueness of a Hamiltonian vector field in general.

\subsection{From pluri-Lagrangian to Hamiltonian systems}
\label{sec-pL2Ham}

We will consider two different types of Lagrangian 2-forms. The first type are those where for every $j$ the coefficient $\cL_{1j}$ is linear in $u_{t_j}$. This is the case for the 2-form for the potential KdV hierarchy from Example \ref{ex-kdv} and for the Lagrangian 2-forms of many other hierarchies like the AKNS hierarchy \cite{sleigh2019lax} and the modified KdV, Schwarzian KdV and Krichever-Novikov hierarchies \cite{vermeeren2019variational}. The second type satisfy the same property for $j > 2$, but have a coefficient $\cL_{12}$ that is quadratic in $u_{t_2}$, as is the case for the  Boussinesq hierarchy from Example \ref{ex-bsq}.

\subsubsection[Linear]{When all $\cL_{1j}$ are linear in $u_{t_j}$}

Consider a Lagrangian 2-form $\cL\llbracket u \rrbracket = \sum_{i < j} \cL_{ij}\llbracket u \rrbracket \,\d t_i \wedge \d t_j$, where for all $j$ the variational derivative $\var{1}{\cL_{1j}}{u_{t_j}}$ does not depend on any $t_j$-derivatives, hence we can write 
\[ \var{1}{\cL_{1j}}{u_{t_j}} = p[u] \]
for some function $p[u]$ depending on on an arbitrary number of space derivatives, but not on any time-derivatives. We use single square brackets $[\cdot]$ to indicate dependence on space derivatives only. Note that $p$ does not depend on the index $j$. This is imposed on us by the multi-time Euler-Lagrange equation stating that $\var{1}{\cL_{1j}}{u_{t_j}}$ is independent of $j$. 

Starting from these assumptions and possibly adding a full $x$-derivative (recall that $x = t_1$) we find that the coefficients $\cL_{1j}$ are of the form
\begin{equation}\label{L-2-1}
	\cL_{1j}\llbracket u \rrbracket = p[u] u_j - h_j[u] ,
\end{equation}
where $u_j$ is shorthand notation for the derivative $u_{t_j}$. Coefficients of this form appear in many prominent examples, like the potential KdV hierarchy and several hierarchies related to it \cite{suris2016lagrangian, vermeeren2019continuum, vermeeren2019variational} as well as the AKNS hierarchy \cite{sleigh2019lax}.
Their Euler-Lagrange equations are
\begin{equation}\label{EL2}
	\cE_p u_j - \var{1}{h_j[u]}{u} = 0 ,
\end{equation}
where $\cE_p$ is the differential operator
\[ \cE_p = \sum_{k = 0}^\infty \left( (-1)^k \partial_x^k \der{p}{u_{x^k}} - \der{p}{u_{x^k}} \partial_x^k \right) . \]
We can also write $\cE_p = \mathsf{D}_p^* - \mathsf{D}_p$, where $\mathsf{D}_p$ is the Fréchet derivative of $p$ and $\mathsf{D}_p^*$ its adjoint \cite[Eqs (5.32) resp.\@ (5.79)]{olver2000applications}.

Consider the pre-symplectic form
\begin{equation}\label{ms}
	\begin{split}
		\omega &= -\delta p[u] \wedge \delta u \wedge \d x \\
		&= - \sum_{k = 1}^\infty \der{p}{u_{x^k}} \delta u_{x^k} \wedge \delta u \wedge \d x .
	\end{split}
\end{equation}
Inserting the vector field $X = \chi \der{}{u}$ we find
\begin{align*}
	\int \iota_X \omega
	&= \int \sum_{k = 0}^\infty \left( \der{p}{u_{x^k}}  \chi \, \delta u_{x^k} \wedge \d x - \der{p}{u_{x^k}} \chi_{x^k} \, \delta u \wedge \d x \right) \\
	&= \int \sum_{k = 0}^\infty \left( (-1)^k \partial_x^k \!\left( \der{p}{u_{x^k}} \chi \right) - \der{p}{u_{x^k}} \chi_{x^k} \right) \delta u \wedge \d x  \\
	&= \int \cE_p \chi \, \delta u \wedge \d x .
\end{align*}
From the Hamiltonian equation of motion
\[ \int \iota_X \omega = \int \delta h_j[u] \wedge \d x \]
we now obtain that the Hamiltonian vector field $X = \chi \der{}{u}$ associated to $h_j$ satisfies
\[ \cE_p \chi = \var{1}{h_j}{u} ,\]
which corresponds the Euler-Lagrange equation \eqref{EL2} by identifying $\chi = u_{t_j}$. This observation was made previously in the context of loop spaces in \cite[Section 1.3]{mokhov1998symplectic}.

The Poisson bracket associated to the symplectic operator $\cE_p$ is formally given by
\begin{equation}\label{poisson}
	\left\{ \tint f \,\d x, \tint g \,\d x \right\} = - \int \var{}{f}{u} \, \cE_p^{-1} \, \var{}{g}{u} \,\d x. 
\end{equation}
If the pre-symplectic form is degenerate, then $\cE_p$ will not be invertible. In this case $\cE_p^{-1}$ can be considered as a pseudo-differential operator and the Poisson bracket is called \emph{non-local} \cite{mokhov1998symplectic,desole2013nonlocal}. Note that $\{ \cdot , \cdot \}$ does not satisfy the Leibniz rule because there is no multiplication on the space $\cF$ of formal integrals. However, we can recover the Leibniz rule in one entry by introducing 
\[ [f , g ] = - \sum_{k=0}^\infty \der{f}{u_{x^k}} \partial_x^k \, \cE_p^{-1} \, \var{}{g}{u}. \]
Then we have
\[ \left\{ \tint f \,\d x , \tint g \,\d x \right\} = \int [f,g] \,\d x \]
and 
\[ [f g, h] = f [g, h] + [f,h] g .\]

In summary, we have the following result:
\begin{thm}\label{thm-ham}
	Assume that $\var{1}{h_j[u]}{u}$ is in the image of $\cE_p$ and has a unique inverse (possibly in some equivalence class) for each $j$. Then the evolutionary PDEs 
	\[ u_j = \cE_p^{-1} \var{1}{h_j[u]}{u},\]
	which imply the Euler-Lagrange equations \eqref{EL2} of the Lagrangians \eqref{L-2-1}, are Hamiltonian with respect to the symplectic form \eqref{ms} and the Poisson bracket \eqref{poisson}, with Hamilton functions $h_j$.
\end{thm}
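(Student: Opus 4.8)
The plan is to assemble the computations already carried out in this subsection and invoke the definition of a Hamiltonian vector field directly; no new heavy calculation is required. By definition, the vector field $X = \chi \der{}{u}$ is Hamiltonian with Hamilton functional $\tint h_j \,\d x$ precisely when $\int \iota_X \omega = \int \delta h_j \wedge \d x$. The left-hand side was shown above to equal $\int \cE_p \chi \, \delta u \wedge \d x$, while by the variational formula \eqref{ext-der} the right-hand side equals $\int \var{}{h_j}{u} \,\delta u \wedge \d x$. First I would observe that these identities hold for every admissible variation $\delta u$ (for instance every compactly supported one, on which $\omega$ is non-degenerate), so that equating the coefficients of $\delta u \wedge \d x$ reduces the Hamiltonian condition to the relation $\cE_p \chi = \var{}{h_j}{u}$.

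Next I would invoke the standing hypothesis that $\var{}{h_j}{u}$ lies in the image of $\cE_p$ and admits a unique inverse (in the relevant equivalence class). This permits solving the relation as $\chi = \cE_p^{-1} \var{}{h_j}{u}$, and identifying $\chi = u_j$ yields exactly the evolutionary PDE in the statement. To confirm that this PDE implies the Euler-Lagrange equation \eqref{EL2} of \eqref{L-2-1}, I would simply apply $\cE_p$ to both sides, obtaining $\cE_p u_j = \var{}{h_j}{u}$, which is \eqref{EL2} rearranged. The converse passage is the delicate one: \eqref{EL2} determines $u_j$ only up to $\ker \cE_p$, and this is precisely the ambiguity absorbed by the equivalence class, so the identification $\chi = u_j$ is consistent exactly because of the uniqueness assumption.

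Finally, for the Poisson bracket I would check that \eqref{poisson} is the bracket canonically associated with $\omega$, namely the pairing of the two Hamiltonian vector fields. Starting from $\int \iota_{X_{h_j}} \omega = \int \var{}{h_j}{u}\, \delta u \wedge \d x$ and contracting with the field $X_{h_i} = \chi_i \der{}{u}$, where $\chi_i = \cE_p^{-1}\var{}{h_i}{u}$, produces $\int \var{}{h_j}{u}\, \cE_p^{-1}\var{}{h_i}{u}\,\d x$; using that $\cE_p = \mathsf{D}_p^* - \mathsf{D}_p$ is skew-adjoint (hence so is $\cE_p^{-1}$) this equals $\{ \tint h_i\,\d x, \tint h_j\,\d x\}$ of \eqref{poisson}, up to the sign fixed by the chosen contraction order.

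The main obstacle is therefore not computational, since the identity $\int \iota_X \omega = \int \cE_p \chi\, \delta u \wedge \d x$ is already in hand, but rather the careful bookkeeping of the degenerate case. When $\cE_p$ fails to be invertible, both $\cE_p^{-1}$ and the bracket \eqref{poisson} must be read as acting on equivalence classes (equivalently, as non-local pseudo-differential operators), and one must verify that the identification $\chi = u_j$ and the passage to \eqref{EL2} remain well defined modulo $\ker \cE_p$. This is exactly the consistency that the hypotheses of the theorem are designed to guarantee.
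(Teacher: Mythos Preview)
Your proposal is correct and follows essentially the same approach as the paper, which presents the theorem as a summary of the preceding computations (the identity $\int \iota_X \omega = \int \cE_p \chi\,\delta u \wedge \d x$ together with the identification $\chi = u_{t_j}$). You in fact supply more detail than the paper does, particularly in verifying that \eqref{poisson} is the bracket induced by $\omega$ via skew-adjointness of $\cE_p$, and in spelling out how the uniqueness hypothesis handles the kernel of $\cE_p$.
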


This theorem applies without assuming any kind of consistency of the system of multi-time Euler-Lagrange equations. Of course we are mostly interested in the case where the multi-time Euler-Lagrange equations are equivalent to an integrable hierarchy. In almost all known examples (see e.g.\@ \cite{suris2016lagrangian,sleigh2019lax,vermeeren2019variational}) the multi-time Euler-Lagrange equations consist of an integrable hierarchy in its evolutionary form and differential consequences thereof. Hence the general picture suggested by these examples is that the multi-time Euler-Lagrange equations are equivalent to the equations  $u_j = \cE_p^{-1} \var{1}{h_j[u]}{u}$ form Theorem \ref{thm-ham}. In light of these observations, we emphasize the following consequence of Theorem \ref{thm-ham}.

\begin{cor}
	If the multi-time Euler-Lagrange equations are evolutionary, then they are Hamiltonian.
\end{cor}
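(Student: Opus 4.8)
The plan is to show that the hypotheses of Theorem \ref{thm-ham} are automatically satisfied as soon as the multi-time Euler-Lagrange equations are evolutionary, so that the corollary becomes an immediate consequence. First I would fix the meaning of \emph{evolutionary}: on solutions each time-derivative is given by $u_j = F_j[u]$, where $F_j$ depends only on $u$ and its $x$-derivatives and contains no time-derivatives. Since the evolutionary equations are assumed equivalent to the full system of multi-time Euler-Lagrange equations, they in particular entail the Euler-Lagrange equation \eqref{EL2}, that is $\cE_p u_j = \var{1}{h_j[u]}{u}$.

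Next I would substitute $u_j = F_j[u]$ into this relation to obtain $\cE_p F_j[u] = \var{1}{h_j[u]}{u}$, which says precisely that $\var{1}{h_j[u]}{u}$ lies in the image of $\cE_p$ --- exactly the hypothesis required in Theorem \ref{thm-ham}. The evolutionary right-hand side moreover supplies an explicit preimage, so one never has to invert the (possibly degenerate) operator $\cE_p$; the symbol $\cE_p^{-1}\var{1}{h_j[u]}{u}$ appearing in Theorem \ref{thm-ham} may simply be read as $F_j[u]$.

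It then remains to quote the computation in the proof of Theorem \ref{thm-ham}: for the vector field $X = F_j \der{}{u}$ one finds $\int \iota_X \omega = \int \cE_p F_j \, \delta u \wedge \d x = \int \var{1}{h_j[u]}{u} \, \delta u \wedge \d x = \int \delta h_j \wedge \d x$, which is the Hamiltonian equation of motion with respect to the pre-symplectic form \eqref{ms} with Hamilton function $h_j$. Hence the evolutionary equations are Hamiltonian. The one point deserving care is the non-uniqueness caused by a degenerate $\cE_p$; I expect this to be the main thing to handle, but it is harmless here, since the evolutionary form itself hands us a canonical representative of the preimage and the displayed Hamiltonian relation holds for that representative irrespective of whether $\cE_p$ is invertible.
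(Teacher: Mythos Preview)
Your proposal is correct and matches the paper's approach: the paper gives no explicit proof at all, simply stating the corollary as a consequence of Theorem~\ref{thm-ham} after the remark that evolutionary multi-time Euler-Lagrange equations are precisely the equations $u_j = \cE_p^{-1}\var{1}{h_j[u]}{u}$ appearing there. Your write-up spells out the implicit reasoning --- that the evolutionary right-hand side $F_j[u]$ furnishes the required preimage under $\cE_p$ --- and your handling of the degeneracy issue is exactly the right clarification.
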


\begin{example}
	The pluri-Lagrangian structure for the potential KdV hierarchy, given in Example \ref{ex-kdv}, has $p = \frac{1}{2} u_x$. Hence
	\[ \cE_p = -\partial_x \der{p}{u_x} - \der{p}{u_x} \partial_x = - \partial_x \]
	and
	\[ \left\{ \tint f \,\d x , \tint g \,\d x \right\} = \int \var{}{f}{u} \partial_x^{-1} \var{}{g}{u} \,\d x . \]
	Here we assume that $\var{}{g}{u}$ is in the image of $\partial_x$. Then $ \partial_x^{-1} \var{}{g}{u}$ is uniquely defined by the convention that it does not contain a constant term.
	If $f$ and $g$ depend only on derivatives of $u$, not on $u$ itself, this becomes the Gardner bracket \cite{gardner1971korteweg}
	\[ \left\{ \tint f \,\d x , \tint g \,\d x \right\} = \int \left(\partial_x \var{}{f}{u_x} \right) \var{}{g}{u_x} \,\d x. \]
	The Hamilton functions are
	\begin{align*}
		h_2[u] &= \frac{1}{2} u_{x} u_{t_2} - \cL_{12} =  u_{x}^3 + \frac{1}{2} u_{x} u_{xxx} , \\
		h_3[u] &= \frac{1}{2} u_{x} u_{t_3} - \cL_{13} =  \frac{5}{2} u_{x}^4 - 5 u_{x} u_{xx}^2 + \frac{1}{2} u_{xxx}^2, \\
		&\vdotswithin{=}
	\end{align*}
	
	A related derivation of the Gardner bracket from the multi-symplectic perspective was given in \cite{gotay1988multisymplectic}. It can also be obtained from the Lagrangian structure by Dirac reduction \cite{macfarlane1982equations}.
\end{example}

\begin{example}
	The Schwarzian KdV hierarchy, 
	\begin{align*}
		u_{2} &= -\frac{3 u_{11}^2}{2 u_{1}} + u_{111} , \\
		u_{3} &= -\frac{45 u_{11}^4}{8 u_{1}^3} + \frac{25 u_{11}^2 u_{111}}{2 u_{1}^2} - \frac{5 u_{111}^2}{2 u_{1}} - \frac{5 u_{11} u_{1111}}{u_{1}} + u_{11111} , \\
		&\vdotswithin{=}
	\end{align*}
	has a pluri-Lagrangian structure with coefficients \cite{vermeeren2019continuum}
	\begin{align*}
		\cL_{12} &= \frac{u_{3}}{2 u_{1}} - \frac{u_{11}^2}{2 u_{1}^2} , \\
		\cL_{13} &= \frac{u_{5}}{2 u_{1}} - \frac{3 u_{11}^4}{8 u_{1}^4} + \frac{u_{111}^2}{2 u_{1}^2} , \\
		\cL_{23} &= -\frac{45 u_{11}^6}{32 u_{1}^6} + \frac{57 u_{11}^4 u_{111}}{16 u_{1}^5} - \frac{19 u_{11}^2 u_{111}^2}{8 u_{1}^4} + \frac{7 u_{111}^3}{4 u_{1}^3} - \frac{3 u_{11}^3 u_{1111}}{4 u_{1}^4} - \frac{3 u_{11} u_{111} u_{1111}}{2 u_{1}^3} \\
		&\quad + \frac{u_{1111}^2}{2 u_{1}^2} + \frac{3 u_{11}^2 u_{11111}}{4 u_{1}^3} - \frac{u_{111} u_{11111}}{2 u_{1}^2} + \frac{u_{111} u_{113}}{u_{1}^2} - \frac{3 u_{11}^3 u_{13}}{2 u_{1}^4} + \frac{2 u_{11} u_{111} u_{13}}{u_{1}^3}\\
		&\quad - \frac{u_{1111} u_{13}}{u_{1}^2} + \frac{u_{11} u_{15}}{u_{1}^2} - \frac{27 u_{11}^4 u_{3}}{16 u_{1}^5} + \frac{17 u_{11}^2 u_{111} u_{3}}{4 u_{1}^4} - \frac{7 u_{111}^2 u_{3}}{4 u_{1}^3} - \frac{3 u_{11} u_{1111} u_{3}}{2 u_{1}^3} \\
		&\quad  + \frac{u_{11111} u_{3}}{2 u_{1}^2} + \frac{u_{11}^2 u_{5}}{4 u_{1}^3} - \frac{u_{111} u_{5}}{2 u_{1}^2}, \\
		&\vdotswithin{=}
	\end{align*}
	In this example we have $p = \frac{1}{2 u_x}$, hence
	\[ \cE_p = -\partial_x \der{p}{u_x} - \der{p}{u_x} \partial_x 
	= \frac{1}{u_x^2} \partial_x - \frac{u_{xx}}{u_x^3}
	= \frac{1}{u_x} \partial_x \frac{1}{u_x} \]
	and 
	\[ \cE_p^{-1} = u_x \partial_x^{-1} u_x . \]
	This nonlocal operator seems to be the simplest Hamiltonian operator for the SKdV equation, see for example \cite{dorfman1987dirac, wilson1988quasi}. The Hamilton functions for the first two equations of the hierarchy are
	\[ 
	h_2 = \frac{u_{11}^2}{2 u_{1}^2} \qquad \text{and} \qquad
	h_3 = \frac{3 u_{11}^4}{8 u_{1}^4} - \frac{u_{111}^2}{2 u_{1}^2} .
	\]
\end{example}

\subsubsection[Quadratic]{When $\cL_{12}$ is quadratic in $u_{t_2}$}

Consider a Lagrangian 2-form $\cL\llbracket u \rrbracket = \sum_{i < j} \cL_{ij}\llbracket u \rrbracket \,\d t_i \wedge \d t_j$ with
\begin{equation}\label{L-2-2a}
	\cL_{12} = \frac{1}{2}\alpha[u] u_2^2 - V[u],
\end{equation}
and, for all $j \geq 3$, $\cL_{1j}$ of the form
\begin{equation}\label{L-2-2b}
	\cL_{1j}\llbracket u \rrbracket = \alpha[u] u_2 u_j - h_j[u,u_2] ,
\end{equation}
where $[u,u_2] = (u,u_2,u_1,u_{12},u_{11},u_{112},\ldots)$ since the single bracket $[\cdot]$ denotes dependence on the fields and their $x$-derivatives only (recall that $x = t_1$). To write down the full set of multi-time Euler-Lagrange equations we need to specify all $\cL_{ij}$, but for the present discussion it is sufficient to consider the equations
\[ \var{12}{\cL_{12}}{u} = 0 \quad \Leftrightarrow \quad
\alpha[u] u_{22} = - \frac{\d \alpha[u]}{\d t_2} u_2 + \frac{1}{2} \sum_{k=0}^\infty (-1)^k \partial_x^k \left(\der{\alpha[u]}{u_{x^k}} u_2^2 \right) - \var{1}{V[u]}{u} \]
and
\[ \var{1j}{\cL_{1j}}{u_2} = 0 \quad \Leftrightarrow \quad
\alpha[u] u_{j} = \var{1}{h_j[u,u_2]}{u_2} . \]
We assume that all other multi-time Euler-Lagrange equations are consequences of these.

Since $\cL_{12}$ is non-degenerate, the Legendre transform is invertible and allows us to identify $\pi = \alpha[u] u_2$. Consider the canonical symplectic form on formal integrals, where now the momentum $\pi$ enters as a second field,
\[ \omega = \delta \pi \wedge \delta u \wedge \d x . \]
This defines the Poisson bracket
\begin{equation}\label{poisson2} 
	\left\{ \tint f \,\d x , \tint g \,\d x \right\} = - \int \left( \var{}{f}{\pi} \var{}{g}{u} - \var{}{f}{u} \var{}{g}{\pi} \right) \d x.
\end{equation}

The coefficients $\cL_{1j}\llbracket u \rrbracket = \alpha[u] u_2 u_j - h_j[u,u_2]$ are linear in their velocities $u_j$, hence they are Hamiltonian with respect to the pre-symplectic form 
\[ \delta (\alpha[u] u_2) \wedge \delta u \wedge \d x , \]
which equals $\omega$ if we identify $\pi = \alpha[u] u_2$. Hence we find the following result.

\begin{thm}\label{thm-ham2}
	A hierarchy described by a Lagrangian 2-form with coefficients of the form \eqref{L-2-2a}--\eqref{L-2-2b} is Hamiltonian with respect to the canonical Poisson bracket \eqref{poisson2}, with Hamilton functions 
	\[ H_2[u,\pi] = \frac{1}{2} \frac{\pi^2}{\alpha[u]} + V[u] \]
	and
	\[ H_j[u,\pi] = h_j \! \left[ u,\frac{\pi}{\alpha[u]} \right] \]
	for $j \geq 3$.	
\end{thm}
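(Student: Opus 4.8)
The plan is to pass from the configuration-space (pluri-Lagrangian) description to the canonical phase space with fields $(u,\pi)$ by a Legendre transform in the $t_2$-direction, fixing $\pi = \alpha[u]u_2$, and then to treat the flow $j=2$ and the flows $j\geq 3$ by the two different mechanisms already available: a genuine (regular) Legendre transform for the quadratic coefficient $\cL_{12}$, and the linear-in-velocity argument of Proposition \ref{prop-linlag} for the coefficients $\cL_{1j}$.

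I would first dispose of the $t_2$-flow. Non-degeneracy of $\cL_{12}$ (that is, $\alpha[u]\neq 0$) makes the relation $\pi = \der{\cL_{12}}{u_2} = \alpha[u]u_2$ invertible, and the Legendre transform $H_2 = \pi u_2 - \cL_{12}$ gives $H_2 = \frac{1}{2}\pi^2/\alpha[u] + V[u]$ at once. The equation $u_2 = \var{}{H_2}{\pi}$ is then nothing but the Legendre relation $\pi = \alpha[u]u_2$, while a direct computation of $\var{}{H_2}{u}$ (holding $\pi$ fixed and then substituting $\pi = \alpha[u]u_2$) reproduces exactly the right-hand side of the displayed equation $\var{12}{\cL_{12}}{u}=0$, once one writes $\pi_2 = \partial_{t_2}(\alpha[u]u_2)$. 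Thus for $j=2$ the second-order Euler--Lagrange equation is equivalent to the canonical Hamilton pair generated by $H_2$.

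For $j\geq 3$ I would rewrite $\cL_{1j} = \pi\,u_{t_j} - H_j[u,\pi]$ with $H_j[u,\pi] = h_j[u,\pi/\alpha[u]]$, a form linear in the velocity $u_{t_j}$ and free of $\pi_{t_j}$. This is precisely the situation of Proposition \ref{prop-linlag} and the linear-in-velocity discussion of the preceding subsection, whose pre-symplectic form $\delta(\alpha[u]u_2)\wedge\delta u\wedge\d x$ coincides with the canonical $\omega=\delta\pi\wedge\delta u\wedge\d x$ under $\pi=\alpha[u]u_2$. The Euler--Lagrange equation $\var{1j}{\cL_{1j}}{u_2}=0$, i.e.\ $\alpha[u]u_j = \var{1}{h_j}{u_2}$, becomes the first Hamilton equation $u_j = \var{}{H_j}{\pi}$ after applying the variational chain rule through $u_2=\pi/\alpha[u]$, since that chain rule yields $\var{}{H_j}{\pi} = \frac{1}{\alpha}\var{1}{h_j}{u_2}$. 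Collecting the two flows then exhibits every evolutionary equation of the hierarchy as the canonical Hamiltonian system generated by the corresponding $H_j$ with respect to \eqref{poisson2}.

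The main obstacle lies in the remaining (momentum) half of the Hamilton equations for $j\geq 3$, namely $\pi_j = -\var{}{H_j}{u}$, which is supplied by the multi-time Euler--Lagrange equation $\var{1j}{\cL_{1j}}{u}=0$ rather than by the defining evolutionary equation. Computing $\var{1j}{\cL_{1j}}{u}$ produces, besides the direct variation of $h_j$, a term coming from the product $\alpha[u]u_2u_j$ that still carries the velocity $u_j$; matching this against $-\var{}{H_j}{u}$ requires eliminating $u_j$ by the first Hamilton equation and then reconciling the result with the chain-rule contributions generated by the $u$-dependence of the Legendre map $\pi=\alpha[u]u_2$. Carrying out this cancellation — essentially checking that introducing $\pi$ as an independent field is compatible with the full pluri-Lagrangian system, as assumed — is the technical core of the argument, whereas the identification of the Hamilton functions $H_2$ and $H_j$ is immediate.
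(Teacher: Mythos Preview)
Your approach is the same as the paper's: a regular Legendre transform for the quadratic coefficient $\cL_{12}$, and the linear-in-velocity mechanism (the field-theoretic analogue of Proposition~\ref{prop-linlag}) for the $\cL_{1j}$ with $j\geq 3$, together with the observation that the resulting pre-symplectic form $\delta(\alpha[u]u_2)\wedge\delta u\wedge\d x$ is the canonical one under $\pi=\alpha[u]u_2$. The paper's argument is in fact the short paragraph immediately preceding the theorem statement; no separate proof is given.

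One remark on what you call the ``main obstacle'': the paper does not carry out the cancellation you describe for the second Hamilton equation $\pi_j=-\var{}{H_j}{u}$. Instead, just before stating the theorem it explicitly assumes that ``all other multi-time Euler-Lagrange equations are consequences'' of the two displayed ones, which absorbs exactly the equation $\var{1j}{\cL_{1j}}{u}=0$ that would supply $\pi_j=-\var{}{H_j}{u}$. So your identification of this point is correct, but in the paper it is handled by hypothesis rather than by the explicit computation you outline; your more detailed treatment is not wrong, just more than the paper actually does.
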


\begin{example}
	The Lagrangian 2-form for the Boussinesq hierarchy from Example \ref{ex-bsq} leads to
	\begin{align*}
		H_{2} &= \frac{1}{2} \pi^2 + 2 u_{1}^3 + \frac{3}{2} u_{11}^2 , \\
		H_{3} &= - 6 u_{1}^4 - 27 u_{1} u_{11}^2 + 6 u_{} \pi_{1} \pi - \frac{9}{2} u_{111}^2 - \frac{3}{2} \pi_1^2 ,
	\end{align*}
	where the Legendre transform identifies $\pi = u_2$. Indeed we have
	\begin{align*}
		\left\{ \tint H_{2} \,\d x , \tint u \,\d x \right\} &= \tint \pi \,\d x  = \tint u_2\,\d x  ,\\
		\left\{ \tint H_{2} \,\d x , \tint \pi \,\d x \right\} &= \tint (12 u_1 u_{11} - 3 u_{111}) \,\d x = \tint \pi_2 \,\d x ,
	\end{align*}
	and 
	\begin{align*}
		\left\{ \tint H_{3} \,\d x , \tint u \,\d x \right\} &= \tint (-6 u_1 \pi + 3 \pi_{11}) \,\d x = \tint u_3 \,\d x , \\
		\left\{ \tint H_{3} \,\d x , \tint \pi \,\d x \right\}& = \tint \left( -72 u_1^2u_{11} + 108 u_{11}u_{111} + 54 u_1 u_{1111} - 6 \pi \pi_1 - 9 u_{111111} \right) \d x \\
		&= \tint \pi_3 \,\d x .
	\end{align*}
\end{example}

\subsection{Closedness and involutivity}
\label{sec-closed2}

Let us now have a look at the relation between the closedness of the Lagrangian 2-form and the involutivity of the corresponding Hamiltonians.

\begin{prop}\label{prop-triple1}
	On solutions of the multi-time Euler-Lagrange equations of a Lagrangian 2-form with coefficients $\cL_{1j}$ given by Equation \eqref{L-2-1}, there holds
	\begin{equation}\label{triple1}
		\{ h_i, h_j \} = \int \left( p_i u_j - p_j u_i \right) \d x = \int \left( \frac{\d \cL_{1i}}{\d t_j} - \frac{\d \cL_{1j}}{\d t_i} \right) \d x,
	\end{equation}
	where the Poisson bracket is given by Equation \eqref{poisson}.
\end{prop}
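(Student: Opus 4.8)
The plan is to verify the two equalities in \eqref{triple1} separately, each reducing to a short computation once the on-shell relation coming from \eqref{EL2} is used. The one nonroutine algebraic fact I would isolate at the outset is that the Fréchet derivative $\mathsf{D}_p$ of $p[u]$, applied to an evolutionary velocity $u_i$, reproduces the total time derivative,
\[ \mathsf{D}_p(u_i) = \sum_{k} \der{p}{u_{x^k}} \partial_x^k u_i = \frac{\d p}{\d t_i} = p_i , \]
which holds precisely because $p$ depends on $x$-derivatives only. Combined with the splitting $\cE_p = \mathsf{D}_p^* - \mathsf{D}_p$ recorded after \eqref{EL2}, the adjoint relation $\int (\mathsf{D}_p^* a)\, b\,\d x = \int a\,(\mathsf{D}_p b)\,\d x$, and the resulting skew-adjointness of $\cE_p$ under the formal integral, everything else is bookkeeping of integrations by parts.

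For the first equality I would start from the definition \eqref{poisson}, $\{h_i,h_j\} = -\int \var{}{h_i}{u}\,\cE_p^{-1}\,\var{}{h_j}{u}\,\d x$. On solutions the Euler--Lagrange equation \eqref{EL2} reads $\var{}{h_j}{u} = \cE_p u_j$, so under the invertibility hypothesis of Theorem \ref{thm-ham} I may substitute $\cE_p^{-1}\var{}{h_j}{u} = u_j$ and $\var{}{h_i}{u} = \cE_p u_i$, which gives $\{h_i,h_j\} = -\int (\cE_p u_i)\,u_j\,\d x$. Writing $\cE_p u_i = \mathsf{D}_p^*(u_i) - \mathsf{D}_p(u_i)$ and using the adjoint relation together with $\mathsf{D}_p(u_j) = p_j$ and $\mathsf{D}_p(u_i) = p_i$ collapses this to $\int (p_i u_j - p_j u_i)\,\d x$, the middle term of \eqref{triple1}.

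For the second equality I would compute the Lagrangian side directly from $\cL_{1j} = p\,u_j - h_j$. Because $u_{ij} = u_{ji}$, the $p\,u_{ij}$ contributions cancel and
\[ \frac{\d \cL_{1i}}{\d t_j} - \frac{\d \cL_{1j}}{\d t_i} = (p_j u_i - p_i u_j) + \left( \frac{\d h_j}{\d t_i} - \frac{\d h_i}{\d t_j} \right) . \]
Under the formal integral the last two terms are total time derivatives of the Hamiltonians, $\int \frac{\d h_j}{\d t_i}\,\d x = \int \var{}{h_j}{u}\, u_i\,\d x = \int (\cE_p u_j)\, u_i\,\d x$, and likewise for the other; by skew-adjointness of $\cE_p$ these equal $+\{h_i,h_j\}$ and $-\{h_i,h_j\}$ respectively, so together they contribute $2\{h_i,h_j\} = 2\int(p_i u_j - p_j u_i)\,\d x$. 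Adding the $(p_j u_i - p_i u_j)$ term, the factor of two collapses exactly as in the derivation of \eqref{closedness1}, leaving $\int(p_i u_j - p_j u_i)\,\d x$ as required.

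The step needing the most care is the treatment of $\cE_p^{-1}$: the bracket \eqref{poisson} is a priori only formal and, when $\cE_p$ is degenerate, nonlocal, so I must justify that on solutions $\cE_p^{-1}\var{}{h_j}{u} = u_j$ holds in the appropriate equivalence class --- which is exactly the hypothesis of Theorem \ref{thm-ham} that $\var{}{h_j}{u}$ lies in the image of $\cE_p$ with a unique inverse. The only remaining risk is a sign slip in the repeated integrations by parts and in assigning the two Hamiltonian derivative terms their $\pm\{h_i,h_j\}$ values; tracking these against the direct Poisson-bracket computation of the first equality provides an internal consistency check.
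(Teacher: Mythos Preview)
Your argument is correct. The two computations you outline each go through, and the internal consistency check you mention (matching the sign of the $\pm\{h_i,h_j\}$ contributions against the direct Poisson-bracket calculation) works out exactly as you describe.

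Your route differs from the paper's mainly in order and in the tool used for the ``direct'' step. The paper first establishes the right-hand equality in \eqref{triple1} by computing $\int \frac{\d \cL_{1i}}{\d t_j}\,\d x$ straight from the multi-time Euler--Lagrange equation $\var{1i}{\cL_{1i}}{u}=0$ (which on solutions gives $\var{1}{\cL_{1i}}{u} = \partial_i p$), obtaining $\int(p_i u_j + p u_{ij})\,\d x$ and hence $\int(p_i u_j - p_j u_i)\,\d x$ by antisymmetrisation. It then expands $\cL_{1i}=pu_i-h_i$ as you do, writes the result as $-\int(p_iu_j-p_ju_i)\,\d x + 2\{h_i,h_j\}$, and equates the two expressions to solve for the bracket. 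You instead establish the left-hand equality first, by a direct manipulation of $-\int(\cE_p u_i)u_j\,\d x$ using the splitting $\cE_p=\mathsf D_p^*-\mathsf D_p$ and the key identity $\mathsf D_p(u_i)=p_i$, and only then turn to the Legendre form. Your approach has the advantage of making the role of $\cE_p$ and its skew-adjointness fully explicit, at the price of needing to invoke the invertibility hypothesis of Theorem \ref{thm-ham} up front; the paper's approach sidesteps the operator algebra in the first step but leaves the identity $\int\frac{\d h_j}{\d t_i}\,\d x = \{h_i,h_j\}$ implicit. The two are essentially the same computation reorganised.
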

\begin{proof}
	On solutions of the Euler-Lagrange equations we have
	\begin{align*}
		\int \frac{\d \cL_{1i}}{\d t_j} \,\d x
		&=\int \left( \var{1}{\cL_{1i}}{u} u_j  + \der{\cL_{1i}}{u_i} u_{ij} \right) \d x \\
		&=\int \left( \left( \frac{\d}{\d t_i}\var{1i}{\cL_{1i}}{u_i} \right) u_j  + \der{\cL_{1i}}{u_i} u_{ij} \right) \d x \\
		&=\int \left( p_i u_j + p u_{ij} \right) \d x .
	\end{align*}
	It follows that
	\begin{equation}\label{pv1}
		\int \left( \frac{\d \cL_{1i}}{\d t_j} - \frac{\d \cL_{1j}}{\d t_i} \right) \d x = \int \left( p_i u_j - p_j u_i \right) \d x .
	\end{equation}
	On the other hand we have that
	\begin{align*}
		\int \left( \frac{\d \cL_{1i}}{\d t_j} - \frac{\d \cL_{1j}}{\d t_i} \right) \d x
		&= \int \left( \frac{\d}{\d t_j} (p u_i - h_i) - \frac{\d}{\d t_i} (p u_j - h_j) \right) \d x \\
		&= - \int \left( p_i u_j - p_j u_i \right) \d x + 2 \{ h_i, h_j \} .
	\end{align*}
	Combined with Equation \eqref{pv1}, this implies both identities in Equation \eqref{triple1}.
\end{proof}

\begin{prop}\label{prop-triple2}
	On solutions of the multi-time Euler-Lagrange equations of a Lagrangian 2-form with coefficients $\cL_{1j}$ given by Equations \eqref{L-2-2a}--\eqref{L-2-2b}, there holds
	\[
	\{ H_i, H_j \} = \int \left( \pi_i u_j - \pi_j u_i \right) \d x = \int \left( \frac{\d \cL_{1i}}{\d t_j} - \frac{\d \cL_{1j}}{\d t_i} \right) \d x,
	\]
	where the Poisson bracket is given by Equation \eqref{poisson2} and the Hamilton functions $H_j$ are given in Theorem \ref{thm-ham2}.
\end{prop}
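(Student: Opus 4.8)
The plan is to mirror the two-pronged argument used for Proposition \ref{prop-triple1}: I will compute the integrated, antisymmetrized quantity $\int\left( \frac{\d \cL_{1i}}{\d t_j} - \frac{\d \cL_{1j}}{\d t_i} \right) \d x$ in two different ways on solutions---once purely in Lagrangian terms and once through the Legendre identity and the Hamiltonian flows---and then equate the two to isolate $\{H_i, H_j\}$. Throughout I use $\pi = \alpha[u] u_2$ as the second field and treat $x = t_1$ as the only direction in which I integrate by parts.

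For the Lagrangian computation I would expand $\frac{\d \cL_{1i}}{\d t_j}$ by the chain rule and integrate by parts in $x$, regarding $\cL_{1i}$ as a function of the fields $u$, $u_2$ and (for $i \geq 3$) $u_i$ together with their $x$-jets. This yields
\[ \int \frac{\d \cL_{1i}}{\d t_j} \,\d x = \int \left( \var{1}{\cL_{1i}}{u}\, u_j + \var{1}{\cL_{1i}}{u_2}\, u_{2j} + \var{1}{\cL_{1i}}{u_i}\, u_{ij} \right) \d x, \]
and the task is to evaluate the three $x$-variational derivatives on solutions. The expected simplifications are: $\var{1}{\cL_{1i}}{u_i} = \alpha[u] u_2 = \pi$ by linearity in $u_i$; $\var{1}{\cL_{1i}}{u_2} = \alpha[u] u_i - \var{1}{h_i}{u_2} = 0$ for $i \geq 3$ by the Euler-Lagrange equation $\var{1i}{\cL_{1i}}{u_2} = 0$ (which reduces to the $x$-derivative since $\cL_{1i}$ carries no $t_i$-derivatives of $u_2$); and $\var{1}{\cL_{1i}}{u} = \pi_i$, obtained by writing the classical Euler-Lagrange equation \eqref{2el1} with $I$ empty, $\var{1i}{\cL_{1i}}{u} = 0$, as $\var{1}{\cL_{1i}}{u} - \partial_i \pi = 0$, the linearity in $u_i$ leaving only the single $\partial_i$-term. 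The case $i = 2$ is handled separately but folds into the same pattern: $\cL_{12}$ has no independent $u_i$-field, its $u_2$-derivative supplies the momentum, $\var{1}{\cL_{12}}{u_2} = \alpha u_2 = \pi$, and $\var{12}{\cL_{12}}{u} = 0$ gives $\var{1}{\cL_{12}}{u} = \partial_2 \pi = \pi_2$. In every case one reaches the uniform identity $\int \frac{\d \cL_{1i}}{\d t_j}\,\d x = \int ( \pi_i u_j + \pi u_{ij} ) \,\d x$, whose antisymmetrization gives
\[ \int \left( \frac{\d \cL_{1i}}{\d t_j} - \frac{\d \cL_{1j}}{\d t_i} \right) \d x = \int ( \pi_i u_j - \pi_j u_i ) \,\d x, \]
the $\pi u_{ij}$ terms cancelling.

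For the Hamiltonian computation I would use the algebraic Legendre identity $\cL_{1i} = \pi u_i - H_i$, valid uniformly for all $i \geq 2$ under $\pi = \alpha[u] u_2$ (for $i = 2$ one checks $\pi u_2 - H_2 = \pi u_2 - \tfrac{1}{2} \pi^2/\alpha - V = \tfrac{1}{2} \alpha u_2^2 - V = \cL_{12}$). Differentiating, integrating, and discarding the cancelling $\pi u_{ij}$ terms gives
\[ \int \left( \frac{\d \cL_{1i}}{\d t_j} - \frac{\d \cL_{1j}}{\d t_i} \right) \d x = \int ( \pi_j u_i - \pi_i u_j ) \,\d x + \int \left( \frac{\d H_j}{\d t_i} - \frac{\d H_i}{\d t_j} \right) \d x. \]
Since by Theorem \ref{thm-ham2} the $t_i$-flow is the Hamiltonian flow of $H_i$ on solutions, we have $\frac{\d}{\d t_i}\tint H_j \,\d x = \{H_i, H_j\}$, so the last integral equals $2\{H_i, H_j\}$ after using antisymmetry of the bracket \eqref{poisson2}. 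Equating this with the Lagrangian expression gives $2\int(\pi_i u_j - \pi_j u_i)\,\d x = 2\{H_i, H_j\}$, which is the claim.

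I expect the main obstacle to be the bookkeeping of the $x$-variational derivatives in the presence of the distinguished second field $\pi = \alpha[u] u_2$, and in particular the uniform treatment of the quadratic coefficient $\cL_{12}$ alongside the linear coefficients $\cL_{1j}$ with $j \geq 3$: one must verify that the $\var{1}{\cL_{1i}}{u_2}$ contribution vanishes for $i \geq 3$ yet supplies exactly the momentum term for $i = 2$, so that the single formula $\int \frac{\d\cL_{1i}}{\d t_j}\,\d x = \int(\pi_i u_j + \pi u_{ij})\,\d x$ persists across both cases. The remaining delicate point is the sign convention in passing between the canonical bracket \eqref{poisson2} and the time derivatives of the Hamilton functionals, which I would pin down exactly as in the proof of Proposition \ref{prop-triple1}.
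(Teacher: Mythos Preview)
Your proposal is correct and follows exactly the approach the paper intends: the paper's proof is simply ``Analogous to the proof of Proposition \ref{prop-triple1}, with $p[u]$ replaced by the field $\pi$'', and you have unwound precisely that analogy, including the case split $i=2$ versus $i\geq 3$ needed to verify that the uniform identity $\int \frac{\d\cL_{1i}}{\d t_j}\,\d x = \int(\pi_i u_j + \pi u_{ij})\,\d x$ holds in both regimes.
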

\begin{proof}
	Analogous to the proof of Proposition \ref{prop-triple1}, with $p[u]$ replaced by the field $\pi$.
\end{proof}

\begin{thm}\label{thm-dL0-2}
	Let $\cL$ be a Lagrangian 2-form with coefficients $\cL_{1j}$ given by Equation \eqref{L-2-1} or by Equations \eqref{L-2-2a}--\eqref{L-2-2b}. Consider the corresponding Hamiltonian structures, given by $H_{1j} = h_j$ or $H_{1j} = H_j$, as in Theorems \ref{thm-ham} and \ref{thm-ham2} respectively. There holds $\{ H_{1i}, H_{1j} \} = 0$ if and only if
	\[ \int \left( \frac{\d \cL_{ij}}{\d t_1} - \frac{\d \cL_{1j}}{\d t_i} + \frac{\d \cL_{1i}}{\d t_j} \right) \d x = 0 \]
	on solutions of the multi-time Euler-Lagrange equations.
\end{thm}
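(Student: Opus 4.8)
The plan is to deduce this statement almost immediately from Propositions \ref{prop-triple1} and \ref{prop-triple2}, which have already done the substantive work of identifying the Poisson bracket with an integrated antisymmetric combination of Lagrangian coefficients. In both the linear case \eqref{L-2-1} (with $H_{1j} = h_j$) and the quadratic case \eqref{L-2-2a}--\eqref{L-2-2b} (with $H_{1j} = H_j$), those propositions give, on solutions of the multi-time Euler-Lagrange equations,
\[ \{ H_{1i}, H_{1j} \} = \int \left( \frac{\d \cL_{1i}}{\d t_j} - \frac{\d \cL_{1j}}{\d t_i} \right) \d x . \]
The entire task therefore reduces to comparing this expression with the integrand appearing in the theorem.

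First I would observe that the integrand $\frac{\d \cL_{ij}}{\d t_1} - \frac{\d \cL_{1j}}{\d t_i} + \frac{\d \cL_{1i}}{\d t_j}$ is precisely the coefficient of $\d t_1 \wedge \d t_i \wedge \d t_j$ in the exterior derivative $\d \cL$, which is what ties the statement to the closedness of the Lagrangian form advertised in the section heading. The key simplification is that, since $t_1 = x$, the first term is a total $x$-derivative: $\frac{\d \cL_{ij}}{\d t_1} = \partial_x \cL_{ij}$. By the very definition of the formal integral as an equivalence class modulo $\partial_x \cV$, we have $\int \partial_x \cL_{ij} \, \d x = 0$.

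Consequently,
\[ \int \left( \frac{\d \cL_{ij}}{\d t_1} - \frac{\d \cL_{1j}}{\d t_i} + \frac{\d \cL_{1i}}{\d t_j} \right) \d x = \int \left( \frac{\d \cL_{1i}}{\d t_j} - \frac{\d \cL_{1j}}{\d t_i} \right) \d x = \{ H_{1i}, H_{1j} \} , \]
where the last equality is Proposition \ref{prop-triple1} or \ref{prop-triple2} according to the case at hand. The two quantities are thus literally equal on solutions, so one vanishes precisely when the other does, and both directions of the equivalence follow simultaneously without any separate argument.

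I do not expect a genuine obstacle here: the hard part—rewriting the Poisson bracket in terms of the Lagrangian coefficients—is already carried out in the two preceding propositions, and the only new ingredient is the elementary observation that the $t_1$-derivative term drops out under the formal integral. The one point deserving a little care is the bookkeeping of signs and index ordering in the definition of $\d \cL$, so as to confirm that the $\frac{\d \cL_{ij}}{\d t_1}$ term really appears as a pure $x$-derivative (hence integrable to zero) rather than combining with the genuine time-derivative terms $\frac{\d \cL_{1j}}{\d t_i}$ and $\frac{\d \cL_{1i}}{\d t_j}$.
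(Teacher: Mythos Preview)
Your proposal is correct and follows essentially the same approach as the paper: observe that $t_1 = x$ makes $\int \partial_x \cL_{ij}\,\d x = 0$ by definition of the formal integral, and then invoke Proposition~\ref{prop-triple1} or~\ref{prop-triple2}. The paper's proof is in fact even terser than yours, consisting of exactly these two observations.
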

\begin{proof}
	Recall that $t_1 = x$, hence $\frac{\d}{\d t_1} = \partial_x$. By definition of the formal integral as an equivalence class, we have $\int \partial_x \cL_{ij} \,\d x = 0$. Hence the claim follows from Proposition \ref{prop-triple1} or Proposition \ref{prop-triple2}.
\end{proof}

It is known that $\d \cL \llbracket u \rrbracket$ is constant in the set of solutions $u$ to the multi-time Euler-Lagrange equations (see Proposition \ref{prop-almost-closed}). In most examples, one can verify using a trivial solution that this constant is zero.

\begin{cor}
	If a Lagrangian 2-form, with coefficients $\cL_{1j}\llbracket u \rrbracket$ given by Equation \eqref{L-2-1} or by Equations \eqref{L-2-2a}--\eqref{L-2-2b}, is closed for a solution $u$ to the pluri-Lagrangian problem, then $\{ H_{1i}, H_{1j} \} = 0$ for all $i,j$.
\end{cor}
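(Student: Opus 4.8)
The plan is to obtain the claim directly from Theorem~\ref{thm-dL0-2}. First I would note that its hypotheses are met: by Theorem~\ref{thm-EL2} a solution $u$ of the pluri-Lagrangian problem is exactly a solution of the multi-time Euler--Lagrange equations, so Theorem~\ref{thm-dL0-2} applies and reduces the vanishing of each Poisson bracket $\{H_{1i},H_{1j}\}$ to the single integrated condition
\[ \int \left( \frac{\d \cL_{ij}}{\d t_1} - \frac{\d \cL_{1j}}{\d t_i} + \frac{\d \cL_{1i}}{\d t_j} \right) \d x = 0 . \]
The whole task is therefore to relate this condition to the hypothesis that $\cL$ is closed on $u$.

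The key step is to recognise the integrand above as a single component of the horizontal $3$-form $\d\cL$. I would write $\d\cL = \sum_{a<b}\sum_c \partial_c \cL_{ab}\,\d t_c \wedge \d t_a \wedge \d t_b$ and, for a fixed triple $1 < i < j$, extract the coefficient of $\d t_1 \wedge \d t_i \wedge \d t_j$. Only the three coefficients $\cL_{ij}$, $\cL_{1j}$ and $\cL_{1i}$ contribute, and after reordering each wedge product into the standard order $\d t_1 \wedge \d t_i \wedge \d t_j$ the coefficient becomes
\[ \frac{\d \cL_{ij}}{\d t_1} - \frac{\d \cL_{1j}}{\d t_i} + \frac{\d \cL_{1i}}{\d t_j}, \]
which is precisely the integrand appearing in Theorem~\ref{thm-dL0-2}. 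This sign-and-index bookkeeping is the only computation in the argument and the step that most needs care, though it is entirely routine.

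With this identification the conclusion follows at once. Closedness of $\cL$ on the solution $u$ means that \emph{every} component of $\d\cL\llbracket u\rrbracket$ vanishes, in particular the component just computed; hence the integrand vanishes pointwise and a fortiori its formal $x$-integral is zero. Theorem~\ref{thm-dL0-2} then yields $\{H_{1i},H_{1j}\}=0$, and since the triple $(1,i,j)$ was arbitrary this holds for all $i,j$. I do not anticipate a genuine obstacle: the argument is purely a matter of matching the exterior derivative of the $2$-form against the expression already produced in Theorem~\ref{thm-dL0-2}. The one point worth stressing is that closedness supplies the stronger pointwise vanishing, whereas Theorem~\ref{thm-dL0-2} requires only the weaker integrated version, so the implication we need runs in the easy direction.
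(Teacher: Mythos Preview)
Your proposal is correct and matches the paper's approach: the paper states this as an immediate corollary of Theorem~\ref{thm-dL0-2} without giving a separate proof, and your argument simply spells out the one-line deduction---identifying the integrand in Theorem~\ref{thm-dL0-2} as the $(1,i,j)$-component of $\d\cL$ and noting that closedness makes it vanish pointwise.
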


All examples of Lagrangian 2-forms discussed so far satisfy $\d \cL = 0$ on solutions. We now present a system where this is not the case.

\begin{example}
	Consider a perturbation of the Boussinesq Lagrangian, obtained by adding $c u$ for some constant $c \in \R$,
	\[ \cL_{12} = \frac{1}{2} u_{2}^2 - 2 u_{1}^3 - \frac{3}{2} u_{11}^2 + c u , \]
	combined with the Lagrangian coefficients
	\begin{align*}
		\cL_{13} &= u_2 (u_3-1) \\
		\cL_{23} &= (6 u_1^2 - 3 u_{111}) (u_3-1) .
	\end{align*}
	The corresponding multi-time Euler-Lagrange equations consist of a perturbed Boussinesq equation,
	\[ u_{22} = 12 u_1 u_{11} - 3 u_{1111} + c \]
	and 
	\[ u_3 = 1. \]
	We have
	\[ \frac{\d \cL_{12}}{\d t_3} - \frac{\d \cL_{13}}{\d t_2} + \frac{\d \cL_{23}}{\d t_1} = c \]
	on solutions, hence $\d \cL$ is nonzero.
	
	The multi-time Euler-Lagrange equations are equivalent to the canonical Hamiltonian systems with
	\begin{align*}
		H_{2} &= \frac{1}{2} \pi^2 + 2 u_{1}^3 + \frac{3}{2} u_{11}^2 - cu \\
		H_{3} &= \pi .
	\end{align*}
	They are not in involution, but rather
	\[ \left\{\tint H_2 \,\d x , \tint H_3 \,\d x \right\} = \int \left( 12 u_{11}u_1 - 3 u_{1111} + c \right) \d x = \int c \, \d x . \]
	
	Note that if we would allow the fields in $\cV$ to depend explicitly on $x$, then we would find $\tint c \,\d x = \tint \partial_x(c x) \,\d x = 0$. Note that this is not a property of the Lagrangian form, but of the function space we work in. Allowing fields that depend on $x$ affects the definition of the formal integral $\tint (\cdot) \,\d x$ as an equivalence class modulo $x$-derivatives. If dependence on $x$ is allowed, then there is no such thing as a nonzero constant functional in this equivalence class. However, in our definition of $\cV$, fields can only depend on $x$ through $u$, hence $c$ is not an $x$-derivative and $\tint c \, \d x$ is not the zero element of $\cF$.
\end{example}

\subsection{Additional (nonlocal) Poisson brackets}
\label{sec-additional}

Even though the closedness property in Section \ref{sec-closed2} involves all coefficients of a Lagrangian 2-form $\cL$, so far we have only used the first row of coefficients $\cL_{1j}$ to construct Hamiltonian structures. A similar procedure can be carried out for other $\cL_{ij}$, but the results are not entirely satisfactory. In particular, it will not lead to true bi-Hamiltonian structures. Because of this slightly disappointing outcome, we will make no effort to present the most general statement possible. Instead we make some convenient assumptions on the form of the coefficients $\cL_{ij}$.

Consider a Lagrangian 2-form $\cL$ such that for all $i<j$ the coefficient $\cL_{ij}$ only contains derivatives with respect to $t_1$, $t_i$ and $t_j$ (no ``alien derivatives'' in the terminology of \cite{vermeeren2019continuum}). In addition, assume that $\cL_{ij}$ can be written as the sum of terms that each contain at most one derivative with respect to $t_i$ (if $i>1$) or $t_j$. In particular, $\cL_{ij}$ does not contain higher derivatives with respect to  $t_i$ (if $i>1$) or $t_j$, but mixed derivatives with respect to $t_1$ and $t_i$ or $t_1$ and $t_j$ are allowed. There is no restriction on the amount of $t_1$-derivatives. 

To get a Hamiltonian description of the evolution along the time direction $t_j$ from the Lagrangian $\cL_{ij}$, we should consider both $t_1$ and $t_i$ as space coordinates. Hence we will work on the space
\[ \cV \big/ \left( \partial_1 \! \cV + \partial_i \! \cV \right) . \]
For $i > 1$, consider the momenta
\[ p^{[i]}[u] = \var{1i}{\cL_{ij}}{u_j} . \]
From the assumption that each term of $\cL_{ij}$ contains at most one time-derivative it follows that $p^{[i]}$ only depends on $u$ and its $x$-derivatives. Note that  $p^{[i]}$ is independent of $j$ because of the multi-time Euler-Lagrange equation \eqref{2el2}. The variational derivative in the definition of $p^{[i]}$ is in the directions $1$ and $i$, corresponding to the formal integral, whereas the Lagrangian coefficient has indices $i$ and $j$. However, we can also write 
\[ p^{[i]}[u] = \var{1}{\cL_{ij}}{u_j}  \]
because of the assumption on the derivatives that occur in $\cL_{ij}$, which excludes mixed derivatives with respect to $t_i$ and $t_j$.

As Hamilton function we can take
\[ H_{ij} = p^{[i]} u_j - \cL_{ij} . \]
Its formal integral $\int H_{ij} \, \d x \wedge \d t_i$ does not depend on any $t_j$-derivatives. Since we are working with 2-dimensional integrals, we should take a $(2,2)$-form as symplectic form. In analogy to Equation \eqref{symp} we take
\[ \omega_i = - \delta p^{[i]} \wedge \delta u \wedge \d x \wedge \d t_i . \]
A Hamiltonian vector field $X =  \chi \der{}{u}$ satisfies
\[ \int \iota_X \omega_i = \int \delta H_{ij} \wedge \d x \wedge \d t_i \]
hence
\[ \cE_{p^{[i]}} \chi = \var{1i}{H_{ij}}{u} , \]
where $\cE_{p^{[i]}}$ is the differential operator
\[ \cE_{p^{[i]}} = \sum_{k = 0}^\infty \left( (-1)^k \partial_x^k \der{p^{[i]}}{u_{x^k}} - \der{p^{[i]}}{u_{x^k}} \partial_x^k \right) \]
The corresponding (nonlocal) Poisson bracket is
\[ \left\{ \tint f \, \d x \wedge \d t_i  , \tint g \, \d x \wedge \d t_i \right\}_i =  -\int \var{1i}{f}{u} \, \cE_{p^{[i]}}^{-1} \, \var{1i}{g}{u}  \, \d x \wedge \d t_i  . \]
Note that $H$ is not skew-symmetric, $H_{ij} \neq H_{ji}$.

The space of functionals $ \cV \big/ \left( \partial_1 \! \cV + \partial_i \! \cV \right)$, on which the Poisson bracket $\{ \cdot , \cdot \}_i$ is defined, depends on $i$ and is different from the space of functionals for the bracket $\{ \cdot , \cdot \}$ from Equation \eqref{poisson2}. Hence no pair of these brackets are compatible with each other in the sense of a bi-Hamiltonian system.

As before, we can relate Poisson brackets between the Hamilton functionals to coefficients of $\d \cL$.
\begin{prop}
	Assume that for all $i,j> 1$, $\cL_{ij}$ does not depend on any second or higher derivatives with respect to $t_i$ and $t_j$. On solutions of the Euler-Lagrange equations there holds that, for $i,j,k > 1$,
	\begin{equation}\label{triple2}
		\begin{split}
			\int \left( \frac{\d \cL_{ij}}{\d t_k} - \frac{\d \cL_{ik}}{\d t_j}  \right) \d x \wedge \d t_i
			&= \int \left( p^{[i]}_j u_k - p^{[i]}_k u_j  \right) \d x \wedge \d t_i \\
			&= \left\{ \tint H_{ij} \, \d x \wedge \d t_i , \tint H_{ik} \, \d x \wedge \d t_i  \right\}_i .
		\end{split}
	\end{equation}
\end{prop}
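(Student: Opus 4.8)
The plan is to follow the structure of the proof of Proposition~\ref{prop-triple1}, but now working with the two-dimensional formal integral $\int(\cdot)\,\d x\wedge\d t_i$, i.e.\ modulo $\partial_1\cV+\partial_i\cV$, so that integration by parts is available in both the $x=t_1$ and the $t_i$ direction. For the first equality I would expand $\frac{\d\cL_{ij}}{\d t_k}=\sum_I\der{\cL_{ij}}{u_I}u_{Ik}$ by the chain rule and integrate by parts. Under the standing assumptions, $\cL_{ij}$ only involves $x$-derivatives, first-order $t_i$-derivatives of the form $u_{x^a i}$, and first-order $t_j$-derivatives of the form $u_{x^b j}$ (no mixed $t_it_j$, no higher $t_i$- or $t_j$-derivatives), and for distinct $i,j,k>1$ it carries no $t_k$-derivatives. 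Splitting the chain-rule sum according to whether a term carries a $t_j$-derivative, the terms without $t_j$ integrate by parts in $x$ and $t_i$ to give $\int\var{1i}{\cL_{ij}}{u}\,u_k\,\d x\wedge\d t_i$, while the terms $\der{\cL_{ij}}{u_{x^b j}}u_{x^b jk}$ integrate by parts in $x$ to give $\int p^{[i]}u_{jk}\,\d x\wedge\d t_i$, since $\sum_b(-1)^b\partial_x^b\der{\cL_{ij}}{u_{x^b j}}=\var{1}{\cL_{ij}}{u_j}=p^{[i]}$. Thus on any field $\int\frac{\d\cL_{ij}}{\d t_k}\,\d x\wedge\d t_i=\int\bigl(\var{1i}{\cL_{ij}}{u}\,u_k+p^{[i]}u_{jk}\bigr)\,\d x\wedge\d t_i$.

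The crucial, and most delicate, step is to show that on solutions $\var{1i}{\cL_{ij}}{u}=p^{[i]}_j$. In Proposition~\ref{prop-triple1} the analogous identity was immediate because the variational directions coincided with the indices of the Lagrangian coefficient; here the directions $(1,i)$ differ from the indices $(i,j)$, so a single Euler-Lagrange equation with $I=0$ no longer suffices. I would write $\var{1i}{\cL_{ij}}{u}=\sum_a(-1)^a\partial_x^a\der{\cL_{ij}}{u_{x^a}}-\partial_i\sum_a(-1)^a\partial_x^a\der{\cL_{ij}}{u_{x^a i}}$ and substitute the multi-time Euler-Lagrange equations \eqref{2el1} with $I=t_1^a$ (admissible since $I\not\ni t_i,t_j$ for $i,j>1$), namely $\der{\cL_{ij}}{u_{x^a}}=\partial_i\der{\cL_{ij}}{u_{x^a i}}+\partial_j\der{\cL_{ij}}{u_{x^a j}}$. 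The $t_i$-derivative contributions then cancel in pairs, leaving $\var{1i}{\cL_{ij}}{u}=\partial_j\sum_a(-1)^a\partial_x^a\der{\cL_{ij}}{u_{x^a j}}=\partial_j p^{[i]}=p^{[i]}_j$. Feeding this back and antisymmetrising in $j$ and $k$ makes the symmetric terms $p^{[i]}u_{jk}=p^{[i]}u_{kj}$ cancel, yielding the first equality $\int\bigl(\frac{\d\cL_{ij}}{\d t_k}-\frac{\d\cL_{ik}}{\d t_j}\bigr)\d x\wedge\d t_i=\int\bigl(p^{[i]}_j u_k-p^{[i]}_k u_j\bigr)\d x\wedge\d t_i$.

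For the second equality I would instead use the Hamiltonian representative $\cL_{ij}=p^{[i]}u_j-H_{ij}$. Expanding $\frac{\d}{\d t_k}(p^{[i]}u_j-H_{ij})-\frac{\d}{\d t_j}(p^{[i]}u_k-H_{ik})$ and integrating, the $p^{[i]}u_{jk}$ terms cancel once more and one obtains $-\int(p^{[i]}_j u_k-p^{[i]}_k u_j)\,\d x\wedge\d t_i+\int\bigl(\frac{\d H_{ik}}{\d t_j}-\frac{\d H_{ij}}{\d t_k}\bigr)\d x\wedge\d t_i$. The key point is that, by the construction of $\omega_i$ and $\{\cdot,\cdot\}_i$ preceding the proposition, the $t_j$-evolution is the Hamiltonian flow of $\tint H_{ij}\,\d x\wedge\d t_i$; hence $\frac{\d}{\d t_j}\tint H_{ik}\,\d x\wedge\d t_i=\{\tint H_{ij}\,\d x\wedge\d t_i,\tint H_{ik}\,\d x\wedge\d t_i\}_i$ and symmetrically for the $t_k$-flow, so that (using the skew-adjointness of $\cE_{p^{[i]}}$ for the sign) the last integral equals $2\{\tint H_{ij}\,\d x\wedge\d t_i,\tint H_{ik}\,\d x\wedge\d t_i\}_i$. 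Combining this with the first equality gives $2\int(p^{[i]}_j u_k-p^{[i]}_k u_j)\,\d x\wedge\d t_i=2\{\tint H_{ij}\,\d x\wedge\d t_i,\tint H_{ik}\,\d x\wedge\d t_i\}_i$, which is the second identity. The main obstacle is the direction/index mismatch handled in the second paragraph; the remainder is a bookkeeping adaptation of Proposition~\ref{prop-triple1} to the two-dimensional formal integral, where one must additionally check that the boundary terms vanish in both the $x$- and the $t_i$-integrations by parts.
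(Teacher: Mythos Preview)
Your proposal is correct and follows precisely the route the paper intends: its proof is the single line ``Analogous to the proof of Proposition~\ref{prop-triple1}.'' You have faithfully carried out that analogy, including the one genuinely new ingredient---the identity $\var{1i}{\cL_{ij}}{u}=p^{[i]}_j$ on solutions, where the variational directions $(1,i)$ no longer match the Lagrangian indices $(i,j)$---which you handle correctly via the multi-time Euler--Lagrange equations \eqref{2el1} with $I=t_1^a$.
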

\begin{proof}
	Analogous to the proof of Proposition \ref{prop-triple1}.
\end{proof}

\begin{example}
	For the potential KdV equation (see Example \ref{ex-kdv}) we have
	\[ p^{[2]} = \var{1}{\cL_{23}}{u_3} = \frac{3}{2} u_{111} + \frac{3}{2} u_1^2, \]
	hence
	\begin{align*}
		\cE_{p^{[2]}} &= -\partial_1 \der{p^{[i]}}{u_{1}} - \partial_1^3 \der{p^{[i]}}{u_{111}} - \der{p^{[i]}}{u_{1}} \partial_1 - \der{p^{[i]}}{u_{111}} \partial_1^3 \\
		&= - 3 \partial_1 u_1 - \frac{3}{2} \partial_1^3 - 3 u_1 \partial_1 - \frac{3}{2} \partial_1^3 \\
		&= -3 \partial_1^3 -6 u_1 \partial_1 - 3 u_{11}.
	\end{align*}
	We have
	\begin{align*}
		H_{23} &= p^{[2]} u_3 - \cL_{23} \\
		&= -3 u_{1}^{5} + \frac{15}{2} u_{1}^{2} u_{11}^{2} - 10 u_{1}^{3} u_{111} + 5 u_{1}^{3} u_{3} - \frac{7}{2} u_{11}^{2} u_{111} - 3 u_{1} u_{111}^{2} + 6 u_{1} u_{11} u_{1111} \\
		&\quad - \frac{3}{2} u_{1}^{2} u_{11111} - 10 u_{1} u_{11} u_{12} + \frac{5}{2} u_{11}^{2} u_{2} + 5 u_{1} u_{111} u_{2} + \frac{1}{2} u_{1111}^{2} - \frac{1}{2} u_{111} u_{11111} \\
		&\quad  + \frac{1}{2} u_{111} u_{112} - \frac{1}{2} u_{1} u_{113} - u_{1111} u_{12} + \frac{1}{2} u_{11} u_{13} + \frac{1}{2} u_{11111} u_{2} + u_{111} u_{3} ,
	\end{align*}
	where the terms involving $t_3$-derivatives cancel out when the Hamiltonian is integrated. Its variational derivative is
	\begin{align*}
		\var{12}{H_{23}}{u} &= 60 u_{1}^3 u_{11} + 75 u_{11}^3 + 300 u_{1} u_{11} u_{111} + 75 u_{1}^2 u_{1111} - 30 u_{1}^2 u_{12} - 30 u_{1} u_{11} u_{2} \\
		&\quad + 120 u_{111} u_{1111} + 72 u_{11} u_{11111} + 24 u_{1} u_{111111} - 30 u_{1} u_{1112} - 45 u_{11} u_{112} \\
		&\quad - 25 u_{111} u_{12} - 5 u_{1111} u_{2} + 2 u_{11111111} - 5 u_{111112} .
	\end{align*}
	On solutions this simplifies to
	\begin{align*}
		\var{12}{H_{23}}{u} &= -210 u_{1}^3 u_{11} - 195 u_{11}^3 - 690 u_{1} u_{11} u_{111} - 150 u_{1}^2 u_{1111} - 210 u_{111} u_{1111} \\
		&\quad- 123 u_{11} u_{11111} - 36 u_{1} u_{111111} - 3 u_{11111111} \\
		&= \cE_{p^{[2]}} \left( 10 u_1^3 + 5 u_{11}^2 + 10 u_1 u_{111} + u_{11111} \right) \\
		&= \cE_{p^{[2]}} u_3 .
	\end{align*}
	Hence
	\[ \frac{\d}{\d t_3} \int u \,\d x \wedge \d t_2
	= \int \cE_{p^{[2]}}^{-1} \var{12}{H_{23}}{u} \,\d x \wedge \d t_2
	= \left\{ \tint H_{23} \, \d x \wedge \d t_2 , \tint u \, \d x \wedge \d t_2 \right\}_2 . \]
\end{example}

\subsection{Comparison with the covariant approach}

In Section \ref{sec-additional} we derived Poisson brackets $\{\cdot,\cdot\}_i$, associated to each time variable $t_i$. This was somewhat cumbersome because we had a priori assigned $x = t_1$ as a distinguished variable. The recent work \cite{caudrelier2020hamiltonian} explores the relation of pluri-Lagrangian structures to covariant Hamiltonian structures. The meaning of ``covariant'' here is that all variables are on the same footing; there is no distinguished $x$ variable. More details on covariant field theory, and its connection to the distinguished-variable (or ``instantaneous'') perspective, can be found in \cite{kanatchikov1998canonical}. The main objects in the covariant Hamiltonian formulation of \cite{caudrelier2020hamiltonian} are:
\begin{itemize}
	\item A ``symplectic multiform'' $\Omega$, which can be expanded as
	\[ \Omega = \sum_j \omega_j \wedge \d t_j, \]
	where each $\omega_j$ is a vertical 2-form in the variational bicomplex.
	
	\item A ``Hamiltonian multiform'' $\mathcal{H} = \sum_{i<j} \mathtt{H}_{ij} \,\d t_i \wedge \d t_j$ which gives the equations of motion through
	\begin{equation}\label{covariant}
		\delta \mathcal{H} = \sum_j \d t_j \wedge \xi_j \lrcorner\, \Omega,
	\end{equation}
	where $\delta$ is the vertical exterior derivative in the variational bicomplex, $\xi_j$ denotes the vector field corresponding to the $t_j$-flow, and $\lrcorner$ denotes the interior product. This equation should be understood as a covariant version of the instantaneous Hamiltonian equation $\delta H = \xi \lrcorner\, \omega$. On the equations of motion there holds $\d \mathcal{H} = 0$ if and only if $\d \cL = 0$.
	
	Since the covariant Hamiltonian equation \eqref{covariant} is of a different form than the instantaneous Hamiltonian equation we use, the coefficients $\mathtt{H}_{ij}$ of the Hamiltonian multiform $\mathcal{H}$ are also different from the $H_{ij}$ we found in Sections \ref{sec-pL2Ham}--\ref{sec-additional}. Our $H_{ij}$ are instantaneous Hamiltonians where $t_1$ and $t_i$ are considered as space variables and the Legendre transformation has been applied with respect to $t_j$.
	
	\item A ``multi-time Poisson bracket'' $\{|\cdot,\cdot|\}$ which defines a pairing between functions or (a certain type of)  horizontal one-forms, defined by 
	\[ \{|F,G|\} = (-1)^r \xi_F \delta G, \]
	where $\xi_F$ is the Hamiltonian (multi-)vector field associated to $F$, and $r$ is the horizontal degree of $F$ (which is either 0 or 1). The equations of motion can be written as 
	\[ \d F =  \sum_{i<j} \{|\mathtt{H}_{ij},F|\} \,\d t_i \wedge \d t_j. \]
\end{itemize}

Single-time Poisson brackets are obtained in \cite{caudrelier2020hamiltonian} by expanding the multi-time Poisson bracket as 
\[ \left\{\left| {\textstyle \sum_j} F_j \,\d t_j, {\textstyle \sum_j} G_j \,\d t_j \right|\right\} = \sum_j \{F_j,G_j\}_j \,\d t_j \]
where 
\begin{equation}\label{covariant-pb}
	\{f,g\}_j = - \xi_f^j \lrcorner\, \delta g \qquad \text{and} \qquad \xi_f^j \lrcorner\, \omega_j = \delta f.
\end{equation}
These are fundamentally different from the Poisson brackets of Sections \ref{sec-pL2Ham}--\ref{sec-additional} because they act on different function spaces.
Equation \eqref{covariant-pb} assumes that $\delta f$ lies in the image of $\omega_j$ (considered as a map from vertical vector fields to vertical one-forms).
For example, for the potential KdV hiararchy one has $\omega_1 = \delta v \wedge \delta v_1$, hence the Poisson bracket $\{\cdot,\cdot\}_1$ can only be applied to functions of $v$ and $v_1$, not to functions depending on any higher derivatives. Similar conditions on the function space apply to the higher Poisson brackets corresponding to $\omega_j$, $j\geq2$. On the other hand, the Poisson brackets of Sections \ref{sec-pL2Ham}--\ref{sec-additional} are defined on an equivalence class of functions modulo certain derivatives, without further restrictions on the functions in this class.

In summary, the single-time Poisson brackets of \cite{caudrelier2020hamiltonian} are constructed with a certain elegance in a covariant way, but they are defined only in a restricted function space. They are different from our Poisson brackets of Section \ref{sec-pL2Ham}--\ref{sec-additional}, which have no such restrictions, but break covariance already in the definition of the function space as an equivalence class. It is not clear how to pass from one picture to the other, or if their respective benefits can be combined into a single approach.

\section{Conclusions}

We have established a connection between pluri-Lagrangian systems and integrable Hamiltonian hierarchies. In the case of ODEs, where the pluri-Lagrangian structure is a 1-form, this connection was already obtained in \cite{suris2013variational}. Our main contribution is its generalization to the case of 2-dimensional PDEs, described by Lagrangian 2-forms. Presumably, this approach extends to Lagrangian $d$-forms of any dimension $d$, but the details of this are postponed to future work.

A central property in the theory of pluri-Lagrangian systems is that the Lagrangian form is (almost) closed on solutions. We showed that closedness is equivalent to the corresponding Hamilton functions being in involution.

Although one can obtain several Poisson brackets (and corresponding Hamilton functions) from one Lagrangian 2-form, these do not form a bi-Hamiltonian structure and it is not clear if a recursion operator can be obtained from them. Hence it remains an open question to find a fully variational description of bi-Hamiltonian hierarchies.

\subsection*{Acknowledgements}

The author would like to thank Frank Nijhoff for his inspiring questions and comments, Matteo Stoppato for helpful discussions about the covariant Hamiltonian approach, Yuri Suris for his constructive criticism on early drafts of this paper, and the anonymous referees for their thoughtful comments.

The author is funded by DFG Research Fellowship VE 1211/1-1. Part of the work presented here was done at TU Berlin, supported by the SFB Transregio 109 ``Discretization in Geometry and Dynamics''.

\appendix
\section{Pluri-Lagrangian systems and the variational bicomplex}
\label{sec-appendix}

In this appendix we study the pluri-Lagrangian principle using the variational bicomplex, described in Section \ref{sec-varbi}. We provide proofs that the multi-time Euler-Lagrange equations from Section \ref{sec-plurilag} are sufficient conditions for criticality. Alternative proofs of this fact can be found in \cite{suris2016lagrangian} and \cite[Appendix A]{sleigh2020lagrangian}.

\begin{prop}
	\label{prop-delta=d}
	The field $u$ is a solution to the pluri-Lagrangian problem of a $d$-form $\cL \llbracket u \rrbracket$ if locally there exists a $(1,d-1)$-form $\Theta$ such that $\delta \cL \llbracket u \rrbracket = \d \Theta$.
\end{prop}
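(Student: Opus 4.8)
The plan is to work entirely inside the variational bicomplex of Section \ref{sec-varbi} and to reduce criticality to a boundary term that vanishes because of the support condition imposed on variations. Recall from the variational principle in Section \ref{sec-varbi} that $u$ is critical in the pluri-Lagrangian sense precisely when
\[ \int_\Gamma \iota_{\pr V} \,\delta \cL \llbracket u \rrbracket = 0 \]
for every vertical vector field $V = v\,\frac{\partial}{\partial u}$ whose prolongation $\pr V = \sum_I v_I \frac{\partial}{\partial u_I}$ vanishes on $\partial\Gamma$. So the whole task is to show that the hypothesis $\delta\cL = \d\Theta$ forces this integral to be zero, for an arbitrary $d$-submanifold $\Gamma$ and an arbitrary admissible $V$.

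First I would substitute $\delta\cL = \d\Theta$ and push the interior product through the horizontal derivative. The key algebraic identity is that $\iota_{\pr V}$ anticommutes with $\d$, that is $\iota_{\pr V}\d = -\d\,\iota_{\pr V}$ on all forms of the bicomplex. Since $\iota_{\pr V}$ and $\d$ are both antiderivations, their anticommutator is a derivation, so it suffices to verify that it annihilates the generators $f\llbracket u\rrbracket$, $\delta u_I$ and $\d t_j$. The only nontrivial case is $\delta u_I$, where the cancellation between $\iota_{\pr V}\d(\delta u_I) = -\sum_j v_{Ij}\,\d t_j$ and $\d\,\iota_{\pr V}(\delta u_I) = \d v_I = \sum_j v_{Ij}\,\d t_j$ is exactly the statement that the prolongation commutes with total derivatives. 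Using this identity,
\[ \int_\Gamma \iota_{\pr V}\,\delta\cL = \int_\Gamma \iota_{\pr V}\,\d\Theta = -\int_\Gamma \d\bigl(\iota_{\pr V}\Theta\bigr). \]

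Next I would apply Stokes' theorem. The contraction $\iota_{\pr V}\Theta$ is a horizontal $(0,d-1)$-form, and once evaluated on the section $\llbracket u\rrbracket$ the horizontal derivative $\d$ coincides with the ordinary exterior derivative on $\Gamma$, because $\partial_j$ acts on $f\llbracket u\rrbracket$ as the total $t_j$-derivative. Hence
\[ -\int_\Gamma \d\bigl(\iota_{\pr V}\Theta\bigr) = -\int_{\partial\Gamma}\iota_{\pr V}\Theta. \]
Writing $\Theta$ as a sum of terms of the shape $a\llbracket u\rrbracket\,\delta u_I\wedge(\text{horizontal }(d{-}1)\text{-form})$ shows that $\iota_{\pr V}\Theta$ is a sum of terms proportional to the components $v_I$ of $\pr V$. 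Since $V$ and all its partial derivatives vanish on $\partial\Gamma$ by Definition \ref{def-pluri}, every $v_I$ vanishes there, the boundary integral is zero, and criticality follows.

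The hard part will be the careful bookkeeping of the two exterior derivatives: establishing the anticommutation $\iota_{\pr V}\d = -\d\,\iota_{\pr V}$ cleanly, rather than invoking a Cartan-type formula that mixes $\delta$ and $\d$, and justifying that the horizontal Stokes theorem genuinely applies to $\iota_{\pr V}\Theta$ after pulling back along the section (and that the merely \emph{local} existence of $\Theta$ suffices, since the computation is local near each point of $\Gamma$). Everything else is routine: identifying $\iota_{\pr V}\Theta$ on $\partial\Gamma$ with an expression in the $v_I$, which vanish by the support hypothesis.
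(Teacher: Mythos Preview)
Your proposal is correct and follows exactly the same route as the paper's proof: substitute $\delta\cL = \d\Theta$, use the anticommutation $\iota_{\pr V}\,\d = -\d\,\iota_{\pr V}$, apply Stokes, and observe the boundary term vanishes. The paper's argument is essentially a one-line version of yours, simply asserting the anticommutation without verifying it on generators and omitting the discussion of why Stokes applies.
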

\begin{proof}
	Consider a field $u$ such that such a $(1,d-1)$-form $\Theta$ exists. Consider any $d$-manifold $\Gamma$ and any variation $v$ that vanishes (along with all its derivatives) on the boundary $\partial \Gamma$. Note that the horizontal exterior derivative $\d$ anti-commutes with the interior product operator $\iota_{V}$, where $V$ is the prolonged vertical vector field $V = \pr (v \partial / \partial_u)$ defined by the variation $v$. It follows that
	\[ \int_\Gamma \iota_{V} \delta \cL = -\int_\Gamma \d \left( \iota_{V} \Theta \right) = -\int_{\partial \Gamma} \iota_{V} \Theta = 0 ,\]
	hence $u$ solves the pluri-Lagrangian problem.
\end{proof}

If we are dealing with a classical Lagrangian problem from mechanics, $\cL = L(u,u_t) \,\d t$, we have $\Theta = -\der{L}{u_t} \delta u$, which is the pull back to the tangent bundle of the canonical 1-form $\sum_i p_i \,\d q_i$ on the cotangent bundle. 

Often we want the Lagrangian form to be closed when evaluated on solutions. As we saw in Theorems \ref{thm-dL0-1} and \ref{thm-dL0-2}, this implies that the corresponding Hamiltonians are in involution. We did not include this in the definition of a pluri-Lagrangian system, because our definition already implies a slightly weaker property.

\begin{prop}\label{prop-almost-closed}
	The horizontal exterior derivative $\d \cL$ of a pluri-Lagrangian form is constant on connected components of the set of critical fields for $\cL$.
\end{prop}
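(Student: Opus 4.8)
The plan is to show that the $(d+1)$-form $\d\cL\llbracket u\rrbracket$ on $M$ does not change as $u$ moves within a connected component of critical fields, by differentiating along a path of solutions. First I would pick a smooth one-parameter family $u_\varepsilon$ of critical fields joining two given solutions in the same component, set $v = \der{u_\varepsilon}{\varepsilon}$, and let $V = \pr\!\left(v\,\der{}{u}\right)$ be its prolongation. Since the horizontal exterior derivative $\d$ acts on the base $M$ and therefore commutes with the $\varepsilon$-derivative, while the $\varepsilon$-derivative acts on the horizontal form $\cL\llbracket u_\varepsilon\rrbracket$ as $\iota_V\delta$, I get
\[ \der{}{\varepsilon}\,\d\cL\llbracket u_\varepsilon\rrbracket = \d\!\left(\iota_V\,\delta\cL\llbracket u_\varepsilon\rrbracket\right). \]
It then suffices to prove that the right-hand side vanishes for every $\varepsilon$.

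The key input is the converse of Proposition \ref{prop-delta=d}: on a critical field the Euler--Lagrange (source) part of $\delta\cL$ vanishes, which is precisely the content of the multi-time Euler--Lagrange equations, so integration by parts in the variational bicomplex produces a $(1,d-1)$-form $\Theta$ with $\delta\cL\llbracket u_\varepsilon\rrbracket = \d\Theta$. Using that $\iota_V$ anti-commutes with $\d$ (as already noted in the proof of Proposition \ref{prop-delta=d}) together with $\d^2 = 0$, I then compute
\[ \d\!\left(\iota_V\,\delta\cL\right) = \d\!\left(\iota_V\,\d\Theta\right) = -\,\d\!\left(\d\,\iota_V\Theta\right) = -\,\d^2\!\left(\iota_V\Theta\right) = 0. \]
Hence $\d\cL\llbracket u_\varepsilon\rrbracket$ is independent of $\varepsilon$; since the two endpoints were arbitrary solutions in one component, $\d\cL$ is constant on connected components of the set of critical fields.

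I expect the main obstacle to be justifying the step $\delta\cL\llbracket u_\varepsilon\rrbracket = \d\Theta$, i.e.\@ the converse direction of Proposition \ref{prop-delta=d}. This requires identifying the source form obtained from $\delta\cL$ by integration by parts and arguing that its vanishing is equivalent to criticality (the multi-time Euler--Lagrange equations of Theorems \ref{thm-EL1} and \ref{thm-EL2}); the boundary terms dropped in that integration by parts are exactly what makes $\Theta$ well defined. A secondary point, easily handled, is the infinite-dimensional bookkeeping: the computation above shows that the directional derivative of $\d\cL$ along every admissible variation through solutions is zero, so $\d\cL$ is locally constant on the solution set and therefore constant on each connected component. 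Everything after the exactness step is purely formal, relying only on $\d^2 = 0$ and the anti-commutation of $\iota_V$ with $\d$.
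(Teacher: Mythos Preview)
Your argument is correct and follows essentially the same path as the paper: both use the identity $\delta\cL = \d\Theta$ on critical fields and then conclude that the directional derivative of $\d\cL$ along any variation through solutions vanishes. The only cosmetic difference is the order of operations: the paper applies the anti-commutation $\d\delta + \delta\d = 0$ first to obtain $\delta(\d\cL) = 0$ and then contracts with $V$, whereas you contract first and then use the anti-commutation of $\iota_V$ with $\d$ together with $\d^2 = 0$; these are the same computation rearranged.
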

\begin{proof}
	Critical points satisfy locally
	\[ \delta \cL = \d \Theta
	\qquad \Rightarrow \qquad
	\d \delta  \cL = 0
	\qquad \Rightarrow \qquad
	\delta \d \cL = 0.\]
	Hence for any variation $v$ the Lie derivative of $\d \cL$ along its prolongation $V = \pr (v \partial / \partial_u)$ is $\iota_{V} \delta (\d \cL) = 0$. Therefore, if a solution $u$ can be continuously deformed into another solution $\bar{u}$, then $\d \cL \llbracket u \rrbracket = \d \cL \llbracket \bar{u} \rrbracket$.
\end{proof}

Now let us prove the sufficiency of the multi-time Euler-Lagrange equations for 1-forms and 2-forms, as given in Theorems \ref{thm-EL1} and \ref{thm-EL2}. For different approaches to the multi-time Euler-Lagrange equations, including proofs of necessity, see \cite{suris2016lagrangian} and \cite{sleigh2020lagrangian}.

\begin{proof}[Proof of sufficiency in Theorem \ref{thm-EL1}.]
	We calculate the vertical exterior derivative $\delta \cL$ of the Lagrangian 1-form, modulo the multi-time Euler-Lagrange Equations \eqref{el1} and \eqref{el2}. We have
	\begin{align*}
		\delta \cL &= \sum_{j=1}^N \sum_I \der{\cL_j}{u_I} \, \delta u_I \wedge \d t_j \\
		&= \sum_{j=1}^N \sum_I \left( \var{j}{\cL_j}{u_I} + \partial_j \var{j}{\cL_j}{u_{It_j}}  \right) \delta u_I \wedge \d t_j .
	\end{align*}
	Rearranging this sum, we find
	\[ \delta \cL
	= \sum_{j=1}^N \left[ \sum_{I \not\ni t_j} \var{j}{\cL_j}{u_I} \delta u_I \wedge \d t_j  + \sum_I \left( \var{j}{\cL_j}{u_{It_j}} \delta u_{I t_j} \wedge \d t_j + \left( \partial_j \var{j}{\cL_j}{u_{It_j}} \right) \delta u_I \wedge \d t_j \right) \right]. \]
	On solutions of Equation \eqref{el2}, we can define the generalized momenta
	\[ p^I = \var{j}{\cL_j}{u_{It_j}} . \]
	Using Equations \eqref{el1} and \eqref{el2} it follows that
	\begin{align*}
		\delta \cL 
		&= \sum_{j=1}^N \sum_I \left( p^I \delta u_{I t_j} \wedge \d t_j + \left( \partial_j p^I \right) \delta u_I \wedge \d t_j \right) 
		= -d \left( \sum_I p^I \delta u_I \right).
	\end{align*}
	This implies by Proposition \ref{prop-delta=d} that $u$ solves the pluri-Lagrangian problem. 
\end{proof}

\begin{proof}[Proof of sufficiency in Theorem \ref{thm-EL2}.]
	We calculate the vertical exterior derivative $\delta \cL$, 
	\begin{align}
		\delta \cL
		&= \sum_{i < j} \sum_I \der{\cL_{ij}}{u_I} \, \delta u_I \wedge \d t_i \wedge \d t_j \notag
		\\
		\label{4term}
		&= \sum_{i < j} \sum_I \left( \var{ij}{\cL_{ij}}{u_I} + \partial_i \var{ij}{\cL_{ij}}{u_{I t_i}} + \partial_j \var{ij}{\cL_{ij}}{u_{I t_j}} + \partial_i \partial_j \var{ij}{\cL_{ij}}{u_{I t_i t_j}} \right) \delta u_I \wedge \d t_i \wedge \d t_j
	\end{align}
	We will rearrange this sum according to the times occurring in the multi-index $I$. We have
	\begin{align*}
		&\sum_I \var{ij}{\cL_{ij}}{u_I} \,\delta u_I = \sum_{I \not\ni t_i,t_j} \var{ij}{\cL_{ij}}{u_I} \,\delta u_I  + \sum_{I \not\ni t_j} \var{ij}{\cL_{ij}}{u_{I t_i}} \,\delta u_{I t_i} \\
		&\hspace{3cm} + \sum_{I \not\ni t_i} \var{ij}{\cL_{ij}}{u_{I t_j}} \,\delta u_{I t_j} + \sum_I \var{ij}{\cL_{ij}}{u_{I t_i t_j}} \,\delta u_{I t_i t_j} , 
		\\
		&\sum_I \partial_i \var{ij}{\cL_{ij}}{u_{I t_i}} \,\delta u_I = \sum_{I \not\ni t_j} \partial_i \var{ij}{\cL_{ij}}{u_{I t_i}} \,\delta u_{I} + \sum_{I} \partial_i \var{ij}{\cL_{ij}}{u_{I t_i t_j}} \,\delta u_{I t_j} , 
		\\
		&\sum_I \partial_j \var{ij}{\cL_{ij}}{u_{I t_j}} \,\delta u_I = \sum_{I \not\ni t_i} \partial_j \var{ij}{\cL_{ij}}{u_{I t_j}} \,\delta u_{I} + \sum_{I} \partial_j \var{ij}{\cL_{ij}}{u_{I t_i t_j}} \,\delta u_{I t_i} .
	\end{align*}
	Modulo the multi-time Euler-Lagrange equations \eqref{2el1}--\eqref{2el3}, we can write these expressions as
	\begin{align*}
		&\sum_I \var{ij}{\cL_{ij}}{u_I} \,\delta u_I = \sum_{I \not\ni t_j} p_j^I \,\delta u_{I t_i} - \sum_{I \not\ni t_i} p_i^I \,\delta u_{I t_j} + 	\sum_I (n_j^I - n_i^I) \,\delta u_{I t_i t_j} , 
		\\
		&\sum_I \partial_i \var{ij}{\cL_{ij}}{u_{I t_i}} \,\delta u_I = \sum_{I \not\ni t_j} \partial_i p_j^I \,\delta u_{I} + \sum_{I} \partial_i (n_j^I - n_i^I) \,\delta u_{I t_j} , 
		\\
		&\sum_I \partial_j \var{ij}{\cL_{ij}}{u_{I t_j}} \,\delta u_I = \sum_{I \not\ni t_i} - \partial_j p_i^I \,\delta u_{I} + \sum_{I} \partial_j (n_j^I - n_i^I) \,\delta u_{I t_i} , 
		\\
		&\sum_I \partial_i \partial_j \var{ij}{\cL_{ij}}{u_{I t_i t_j}} \,\delta u_I = \sum_I \partial_i \partial_j (n_j^I - n_i^I) \delta u_I.
	\end{align*}
	where
	\begin{align*}
		& p_j^I = \var{1j}{\cL_{1j}}{u_{It_1}} \qquad \text{for } I \not\ni t_j, \\
		& n_j^I = \var{1j}{\cL_{1j}}{u_{It_1t_j}} .
	\end{align*} 
	Note that here the indices of $p$ and $n$ are labels, not derivatives. Hence on solutions to equations \eqref{2el1}--\eqref{2el3}, Equation \eqref{4term} is equivalent to
	\begin{align*}
		\delta \cL
		&=  \sum_{i < j}  \Bigg[
		\sum_{I \not\ni t_j}  \left( p_j^I \,\delta u_{I t_i} + \partial_i p_j^I \,\delta u_I \right) - \sum_{I \not\ni t_i}  \left( p_i^I \,\delta u_{I t_j} + \partial_j p_i^I \,\delta u_I \right) \\
		&\hspace{15mm} + \sum_I \Big( (n_j^I-n_i^I) \,\delta u_{I t_i t_j} + \partial_j (n_j^I-n_i^I) \,\delta u_{I t_i} \\
		&\hspace{3cm} + \partial_i (n_j^I-n_i^I) \,\delta u_{I t_j} + \partial_i \partial_j (n_j^I-n_i^I) \,\delta u_I \Big) \Bigg]  \wedge \d t_i \wedge \d t_j.
	\end{align*}
	Using the anti-symmetry of the wedge product, we can write this as
	\begin{align*}
		\delta \cL &= \sum_{i,j=1}^N \Bigg[
		\sum_{I \not\ni t_j}  \left( p_j^I \, \delta u_{I t_i} + \partial_i p_j^I \,\delta u_I  \right) \\
		&\hspace{15mm} + \sum_I \Big( n_j^I \,\delta u_{I t_i t_j} + \partial_j n_j^I \,\delta u_{I t_i} + \partial_i n_j^I \,\delta u_{I t_j} + \partial_i \partial_j n_j^I \,\delta u_I  \Big) \Bigg]  \wedge \d t_i \wedge \d t_j
		\\
		&= \sum_{j=1}^N \Bigg[
		\sum_{I \not\ni t_j} - \d \left( p_j^I \, \delta u_I \wedge \d t_j \right)  + \sum_I -d \left( n_j^I \,\delta u_{I t_j} \wedge \d t_j + \partial_j n_j^I \,\delta u_{I} \wedge \d t_j \right) \Bigg] .
	\end{align*}
	It now follows by Proposition \ref{prop-delta=d} that $u$ is a critical field.
\end{proof}

\strut%\vfill
 
%\pagebreak
 
\bibliographystyle{abbrvnat_mv}
\bibliography{ham}

\label{lastpage}
\end{document}